\newtheorem{thm}{Theorem}
\newtheorem{lemma}{Lemma}
\newtheorem{pro}{Proposition}
\newtheorem{rk}{Remark}
\numberwithin{equation}{section} \setcounter{tocdepth}{1}
\def\s{\sigma}
\def\s{\sigma}
\def\s{\sigma}
\def\Z{\mathbb{Z}}
\begin{document}
\title[Gibbs measures for the SOS model]{
Extremality of translation-invariant phases 
for a finite-state SOS-model on the binary tree 
%a Cayley tree of order two
}

\author{C. Kuelske, U. A. Rozikov}

\address{C.\ Kuelske\\ Fakult\"at f\"ur Mathematik,
Ruhr-University of Bochum, Postfach 102148,\,
44721, Bochum,
Germany}
\email {Christof.Kuelske@ruhr-uni-bochum.de}

\address{U.\ A.\ Rozikov\\ Institute of mathematics,
29, Do'rmon Yo'li str., 100125, Tashkent, Uzbekistan.}
\email {rozikovu@yandex.ru}

\begin{abstract} 
We consider the SOS (solid-on-solid) model, with spin values $0,1,2$, 
on the Cayley tree of order two (binary tree). 
We treat both ferromagnetic and antiferromagnetic coupling,  
with interactions which are proportional to the absolute value of the spin differences. 

We present a classification of all 
translation-invariant phases (splitting Gibbs measures) of the model: 
We show  uniqueness in the case of 
antiferromagnetic interactions, and existence of up to seven phases in the case of ferromagnetic 
interactions, where the number of phases depends on the interaction strength. 

Next we investigate whether these 
states are extremal or non-extremal in the set of all Gibbs measures, 
when the coupling strength is varied,  whenever they exist.  
 We show that two states are always extremal,   
two states are always non-extremal, 
while three of the seven states make transitions between extremality and non-extremality. 
We provide explicit bounds on those transition values,  
making use of algebraic properties of the models, and an adaptation of the method of 
Martinelli, Sinclair, Weitz.  

\end{abstract}
\maketitle

{\bf Mathematics Subject Classifications (2010).} 82B26 (primary);
60K35 (secondary)

{\bf{Key words.}} SOS model, temperature, Cayley tree,
Gibbs measure, extreme measure, tree-indexed Markov chain, reconstruction problem.

\section{Introduction}

A solid-on-solid (SOS) model is a spin system with spins taking values in (a subset of) the 
integers, and formal Hamiltonian 
$$
 H(\sigma)=-J\sum_{\langle x,y\rangle}
|\sigma(x)-\sigma(y)|,
$$
where $J\in \mathbb{R}$ is a coupling constant.  
An (infinite-volume) spin-configuration $\sigma$ is a function 
from the vertices of the underlying graph
 to the local configuration space $\Phi \subset \Z$. The vertex will be the Cayley tree in 
 our case, and for most of our analysis we will restrict to the binary tree.  As usual,
$\langle x,y\rangle$ denotes a pair of nearest neighbor vertices.  
For the local configuration space $\Phi$ we consider in the present paper the {\em finite set}  
$\Phi:=\{0,1,\ldots , m\}$, where $m\geq 1$. Most of the times we will further specify to $m=2$ for which 
we will present an (almost) complete analysis of the translation-invariant.

The model can be considered as a generalization
of the Ising model, which corresponds to $m=1$, or a less symmetric variant of the Potts model.  
SOS-models on the cubic lattice were analyzed in \cite{Maz} where an analogue of
the so-called Dinaburg--Mazel--Sinai theory was developed.
Besides interesting phase transitions in these models, the
attention to them is motivated by applications, in particular
in the theory of communication networks; see, e.g., \cite{Kel}, \cite{Ra}.

SOS models with $\Phi=\Z$ have been used as simplified discrete 
interface models which should approximate the behavior of a Dobrushin-state 
in an Ising model when the underlying graph is $\Z^d$, and $d\geq 2$. 
While there is the issue of possible 
non-existence of Gibbs-states in the case of such unbounded spins, in particular in the additional presence of {\em  disorder} (see \cite{BK} and \cite{BK1}),  this issue is not present here, and we are looking for a classification of the phases. 

Indeed, compared to the Potts model, the SOS model has
less symmetry: The full symmetry of the Hamiltonian 
under joint permutation of the spin values is reduced to the mirror symmetry, 
which is the invariance of the model under the map $\s_i \mapsto m- \s_i$ on the local spin space.  
Therefore one expects a more diverse structure
of phases. Note that, in the ferromagnetic  case
it is intuitively plausible that the ground states
corresponding to `middle-level
surfaces' will be `dominant' as they carry more entropy. 
 This observation was made formal
in \cite{Maz} for the model on a cubic lattice.

To the best of our knowledge, the first paper devoted to the SOS model on the Cayley tree is \cite{Ro12}.
In \cite{Ro12} the case of  arbitrary $m\geq 1$ is treated and 
a vector-valued functional equation for possible boundary laws of the model is obtained. 
Recall that each solution to this functional equation determines a splitting Gibbs measure (SGM), 
in other words a tree-indexed Markov chain. Such measures can be obtained by propagating spin values along the edges of the tree, from root to the outside, 
with a transition matrix depending on initial 
Hamiltonian and the boundary law solution. 
In particular the constant (site-independent) boundary laws then 
define translation-invariant (TI) SGMs. 

Also the symmetry (or absence of symmetry) of the Gibbs measures under spin reflection is seen in terms of the corresponding boundary law.  
TISGM's which are symmetric have already been studied in SOS models 
in the particular cases of $m=2$  in \cite{Ro12}, and   $m=3$  in   \cite{Ro12}. See also \cite{Ro} for more details about SOS models on trees. 

However, the study of TISGMs which are {\em not mirror symmetric} is new.  
In this paper we describe all TISGMs (including (non-)symmetric ones) of the three-state ($m=2$) SOS model on the Cayley tree of order two.  

The paper is organized as follows. Section 2 contains preliminaries (necessary definitions and facts) and the main result of this paper.
In Section 3 we shall give the description of all TISGMs, 
and show that their number can be up to seven, at any given 
value of the coupling. We then turn to the question of their extremality. 
Analogous questions has been studied by the authors for all TISGMs of the
Potts model in \cite{KRK}, \cite{KR}
and we will draw from our experience  to treat the present situation, incorporating the non-symmetric states. As we will see, our classification for the SOS-model leaves 
fewer gaps than for the Potts model. 
More precisely, Subsection 4.1 is devoted to conditions implying 
the non-extremality for each such TISGM. 
We shall investigate whether and for which phases and temperatures 
the Kesten-Stigum condition \cite{Ke} for the second largest eigenvalue of the transition 
matrix holds.  
 Subsection 4.2 is then devoted to the converse problem of giving conditions for extremality of 
 TISGMs in our model.  
Here we use the approach of Martinelli, Sinclair, Weitz  \cite{MSW} to derive our bounds 
on the parameter regimes for the absence of reconstruction solvability (extremality). 
  
\section{Preliminaries and the main result}

{\it Cayley tree.}
The Cayley tree $\Gamma^k$
of order $ k\geq 1 $ is an infinite tree, i.e., a graph without
cycles, such that exactly $k+1$ edges originate from each vertex.
Let $\Gamma^k=(V, L)$ where $V$ is the set of vertices and  $L$ the set of edges.
Two vertices $x$ and $y$ are called {\it nearest neighbors} if there exists an
edge $l \in L$ connecting them.
We will use the notation $l=\langle x,y\rangle$.
A collection of nearest neighbor pairs $\langle x,x_1\rangle, \langle x_1,x_2\rangle,...,\langle x_{d-1},y\rangle$ is called a {\it
path} from $x$ to $y$. The distance $d(x,y)$ on the Cayley tree is the number of edges of the shortest path from $x$ to $y$ (which is the unique path if no edges are crossed twice). 

For a fixed $x^0\in V$, called the root, we set
\begin{equation*}
W_n=\{x\in V\,| \, d(x,x^0)=n\}, \qquad V_n=\bigcup_{m=0}^n W_m
\end{equation*}
and denote by
$$
S(x)=\{y\in W_{n+1} :  d(x,y)=1 \}, \ \ x\in W_n, $$ the set  of {\it direct successors} of $x$.

{\it SOS model.} We consider models where the spin takes values in the set
$\Phi:=\{0,1,\ldots , m\}$, $m\geq 2$, and is assigned to the vertices
of the tree. A configuration $\sigma$ on $V$ is then defined
as a function $x\in V\mapsto\sigma (x)\in\Phi$;
the set of all configurations is $\Phi^V$.
The (formal) Hamiltonian is of an SOS form:
\begin{equation}\label{rs1.1}
 H(\sigma)=-J\sum_{\langle x,y\rangle\in L}
|\sigma(x)-\sigma(y)|,
\end{equation}
where $J\in \mathbb{R}$ is a coupling constant.

Here, $J<0$
gives a ferromagnetic  and $J>0$ an anti-ferromagnetic
model.

We use a standard definition  of a Gibbs measure (which is an infinite-volume measure 
which satisfies the DLR equation), and of a translation-invariant (TI)
measure (which is a measure which is invariant under translations which map the 
tree onto itself). Also, we call measure $\mu$ {\em symmetric} 
if it is preserved under the simultaneous change $j\mapsto m-j$
at each vertex $x\in V$.

{\it Functional equations and splitting Gibbs measures.}
Now we shall give a system of functional equations for boundary laws $z$ (or equivalently 
boundary fields $h$) whose solutions correspond
to Gibbs measures of SOS model on the Cayley tree. Every extremal Gibbs measure arises 
in this way (even without the requirement of 
translation-invariance), but not necessarily every measure which arises in this way is extremal, 
see\cite{Ge}. 
We recall the derivation of the equations via the compatibility requirement 
for the convenience of the reader. 

Let $h:\;x\mapsto h_x=(h_{0,x}, h_{1,x},...,h_{m,x})
\in \mathbb{R}^{m+1}$ be a real vector-valued function of $x\in V\setminus
\{x^0\}$, assigning to the vertex $x$ a boundary field (depending on the $m+1$ 
different spin-values in the local spin space $\Phi$.)

 Given $n=1,2,\ldots$,
consider the probability distribution $\mu_n$ on
$\Phi^{V_n}$ defined by
\begin{equation}\label{rs2.1}
\mu^{(n)}(\sigma_n)=Z_n^{-1}\exp\left(-\beta H(\sigma_n)
+\sum_{x\in W_n}h_{\sigma(x),x}\right).
\end{equation}

Here, $\sigma_n:x\in V_n\mapsto \sigma(x)$
and $Z_n$ is the corresponding partition function:

\begin{equation}\label{rs2.2}
Z_n=\sum_{{\widetilde\sigma}_n\in\Phi^{V_n}}
\exp\left(-\beta H({\widetilde\sigma}_n)
+\sum_{x\in W_n}h_{{\widetilde\sigma}(x),x}\right).
\end{equation}

We say that the probability distributions $\mu^{(n)}$
are compatible if $\forall$ $n\geq 1$ and $\sigma_{n-1}\in\Phi^{V_{n-1}}$:
\begin{equation}\label{rs2.3}
\sum_{\omega_n\in\Phi^{W_n}}\mu^{(n)}(\sigma_{n-1}\vee\omega_n)=
\mu^{(n-1)}(\sigma_{n-1}).
\end{equation}
Here $\sigma_{n-1}\vee\omega_n\in\Phi^{V_n}$ is the concatenation
of $\sigma_{n-1}$ and $\omega_n$.
In this case there exists a unique measure $\mu$ on
$\Phi^V$ such that, $\forall$ $n$ and
$\sigma_n\in\Phi^{V_n}$, $\mu \left(\left\{\sigma
\Big|_{V_n}=\sigma_n\right\}\right)=\mu^{(n)}(\sigma_n)$. Such
a measure is called a {\it splitting Gibbs measure} (SGM) corresponding to
Hamiltonian $H$ and function $x\mapsto h_x$, $x\neq x^0$.

The following statement describes the conditions on
the boundary fields $h_x$ guaranteeing
compatibility of distributions $\mu^{(n)}(\sigma_n).$
When we do this we can reduce the dimension by one 
since boundary fields (which act as energies in the exponent) 
are defined only up to additive constants.

\begin{pro}\label{rsp2.1}\cite{Ro12} Probability distributions
$\mu^{(n)}(\sigma_n)$, $n=1,2,\ldots$, in (\ref{rs2.1}) are compatible iff for any $x\in V\setminus\{x^0\}$
the following equation holds:
\begin{equation}\label{rs2.4}
 h^*_x=\sum_{y\in S(x)}F(h^*_y,m,\theta).
 \end{equation}
Here,
\begin{equation}\label{rs2.5}
\theta=\exp(J\beta ),
\end{equation}
$h^*_x$ stands for the vector
$(h_{0,x}-h_{m,x}, h_{1,x}-h_{m,x},...,h_{m-1,x}-h_{m,x})$ and
the vector function $F(\;\cdot\;,m,\theta ):\;\mathbb{R}^m\to \mathbb{R}^m$
is $F(h,m,\theta )=
(F_0(h,m,\theta ),\ldots ,F_{m-1}(h,m,\theta))$, with
\begin{equation}\label{rs2.6}
F_i(h,m,\theta )=\ln{\sum_{j=0}^{m-1}
\theta^{|i-j|}\exp(h_j)+\theta^{m-i}\over
\sum_{j=0}^{m-1}\theta^{m-j}\exp(h_j)+1},
\end{equation}
$h=(h_0,h_1,...,h_{m-1}),
i=0,\ldots ,m-1.$
\end{pro}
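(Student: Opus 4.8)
The plan is to substitute the explicit form of $\mu^{(n)}$ from (\ref{rs2.1}) into the compatibility relation (\ref{rs2.3}) and exploit the fact that the Cayley tree is acyclic, so that summing over the configuration on the outermost shell $W_n$ factorizes over vertices. First I would split the Hamiltonian on $V_n$ as $H(\sigma_{n-1}\vee\omega_n)=H(\sigma_{n-1})-J\sum_{x\in W_{n-1}}\sum_{y\in S(x)}|\sigma(x)-\omega(y)|$, using that the only new edges in passing from $V_{n-1}$ to $V_n$ join each $x\in W_{n-1}$ to its successors $y\in S(x)$. Because every $y\in W_n$ has a unique parent, the weight $\exp(J\beta\sum_{x}\sum_{y}|\sigma(x)-\omega(y)|+\sum_{y\in W_n}h_{\omega(y),y})$ factorizes as a product over $y\in W_n$, and the sum over $\omega_n$ collapses to $\prod_{x\in W_{n-1}}\prod_{y\in S(x)}\big(\sum_{i=0}^m\theta^{|\sigma(x)-i|}e^{h_{i,y}}\big)$, where $\theta=e^{J\beta}$ as in (\ref{rs2.5}).

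After cancelling the common factor $\exp(-\beta H(\sigma_{n-1}))$ from both sides, compatibility becomes the requirement that, for every $\sigma_{n-1}$, the identity $\frac{Z_{n-1}}{Z_n}\prod_{x\in W_{n-1}}\prod_{y\in S(x)}\big(\sum_{i=0}^m\theta^{|\sigma(x)-i|}e^{h_{i,y}}\big)=\exp\big(\sum_{x\in W_{n-1}}h_{\sigma(x),x}\big)$ holds. The point that needs care is that this single scalar identity carries an unknown global constant $Z_{n-1}/Z_n$ and apparently couples all vertices of $W_{n-1}$; to decouple it I would take logarithms and then, for a fixed vertex $x$, subtract the identity evaluated at $\sigma(x)=m$ from the one at $\sigma(x)=s$, keeping all other spins fixed. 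The partition-function constant and every term indexed by $x'\neq x$ cancel, leaving the per-vertex relation $h_{s,x}-h_{m,x}=\sum_{y\in S(x)}\ln\frac{\sum_i\theta^{|s-i|}e^{h_{i,y}}}{\sum_i\theta^{|m-i|}e^{h_{i,y}}}$ for $s=0,\dots,m-1$.

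To match (\ref{rs2.4})--(\ref{rs2.6}) I would then pass to the reduced variables $h^*$ by pulling the factor $e^{h_{m,y}}$ out of each inner sum; since $|s-m|=m-s$ and $|m-i|=m-i$ for $0\le s,i\le m$, this factor cancels in the ratio, the numerator becomes $\sum_{i=0}^{m-1}\theta^{|s-i|}e^{h^*_{i,y}}+\theta^{m-s}$ and the denominator becomes $\sum_{i=0}^{m-1}\theta^{m-i}e^{h^*_{i,y}}+1$. This is exactly $\exp F_s(h^*_y,m,\theta)$, so the left-hand side, which is the $s$-th component of $h^*_x$, satisfies (\ref{rs2.4}). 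For the converse I would run the computation backwards: given boundary fields solving (\ref{rs2.4}), I recover the product identity above up to a $\sigma_{n-1}$-independent constant, and then fix $Z_n$ (equivalently, the free additive constants in the $h_{\cdot,x}$) so that the normalization matches, which forces (\ref{rs2.3}).

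The main obstacle I anticipate is not any single computation but the bookkeeping in the decoupling step: one must verify that the global factor $Z_{n-1}/Z_n$, together with the additive freedom in the boundary fields (the reason the dimension drops from $m+1$ to $m$), is exactly absorbed by passing to the differences $h_{s,x}-h_{m,x}$, so that the seemingly coupled, normalization-dependent identity is genuinely equivalent to the \emph{independent} fixed-point equations (\ref{rs2.4}). Establishing this equivalence cleanly in both directions is the crux; the factorization over $W_n$ and the algebraic reduction to the explicit $F_i$ in (\ref{rs2.6}) are then routine.
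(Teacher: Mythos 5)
Your argument is correct and is essentially the standard derivation from \cite{Ro12} that the paper cites rather than reproduces: splitting off the edges between $W_{n-1}$ and $W_n$, factorizing the sum over $\omega_n$ vertex-by-vertex, taking the ratio of the resulting identity at two configurations differing only in $\sigma(x)$ (value $s$ versus $m$) to cancel $Z_{n-1}/Z_n$ and decouple the vertices, and dividing through by $e^{h_{m,y}}$ to land on (\ref{rs2.6}). Your only imprecision is in the converse, which is simpler than you suggest: since both sides of (\ref{rs2.3}) are probability distributions on $\Phi^{V_{n-1}}$, the proportionality you derive with a $\sigma_{n-1}$-independent constant already forces that constant to be $1$ upon summing over $\sigma_{n-1}$, so no adjustment of $Z_n$ or of the additive constants in $h$ is needed.
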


From Proposition \ref{rsp2.1} it follows that for any $h=\{h_x,\ \ x\in V\}$
satisfying (\ref{rs2.4}) there exists a unique SGM $\mu$ for SOS model. However,
the analysis of solutions to (\ref{rs2.4}) for an arbitrary
$m$ is not easy.

{\it Translation-invariant SGMs.}
It is natural to begin with TI solutions
where $h_x=h\in \mathbb{R}^m$ is constant vector. In this case the equation  (\ref{rs2.4}) becomes
\begin{equation}\label{ti0}
z_i=\left(\sum_{j=0}^{m-1}
\theta^{|i-j|}z_j+\theta^{m-i}\over
\sum_{j=0}^{m-1}\theta^{m-j}z_j+1\right)^k, \ \ i=0,\ldots ,m-1,
\end{equation}
where $z_i=\exp(h_i)$. The vector $(z_0, \dots, z_{m-1})$ is 
called a (translation-invariant) law.  More generally it is common, also 
in the non-translation invariant case,  to call the exponentials of boundary fields 
the boundary laws.

\begin{rk} The system of equations (\ref{ti0}) has parameters $k\geq 2$, $m\geq 2$ and $\theta>0$, 
and it seems very difficult to find all solutions in the general case. 

In cases $m=2$ and $m=3$ the existence of mirror symmetric solutions (i.e. with  $z_{m-j}=z_j$, $j=0,1,\dots,m$) to the system (\ref{ti0}) were studied in \cite{Ro12} and \cite{Ro13}.
In this paper our goal is to give full analysis of
solutions of the system (\ref{ti0}) for $k=2$ and $m=2$.
We shall prove that in this case the system has up to seven solutions.
Moreover we will find explicit formulas of the solutions, which we then use to check the (non-)extremality of the corresponding Gibbs measures.
\end{rk}

{\it The main result.} The following theorem is the main result of this paper
\begin{thm}\label{tm}
For the SOS model with $m=2$ on the Cayley tree of order two the following assertions hold:
 there exist $\theta_c$ $(\approx 0.1414)$ and $\theta_c'$ $(\approx 0.2956$\footnote{see (\ref{tc}) for an exact value}) such that
 
 I.(Existence)
\begin{itemize}
 \item[1)] If $\theta>\theta_c'$ then there exists a unique TISGM $\mu_1$;
  \item[2)] If $\theta=\theta_c'$ then there are exactly three TISGMs $\mu_i$, $i=1,4,6$;
  \item[3)] If $\theta_c<\theta<\theta_c'$ then there are exactly five TISGMs $\mu_i$, $i=1,4,5,6,7$;
  \item[4)] If $\theta=\theta_c$ then there are exactly six such measures $\mu_i$, $i=1,3,4,5,6,7$;
  \item[5)] If $\theta<\theta_c$ then there are exactly seven such measures $\mu_i$, $i=1,2,3,4,5,6,7$.
    \end{itemize}
   II.(Extremality)
   \begin{itemize}
   \item[a)] There are values $\bar{\theta}$ $(\approx 2.655)$ and $\bar{\bar{\theta}}$ $(\approx 2.8765)$ such that the measure $\mu_1$ is extreme if $\theta<\bar{\theta}$ and is non-extreme if $\theta>\bar{\bar{\theta}}$.
   \item[b)] The measures $\mu_2$ and $\mu_3$ are non-extreme (where they exist).
   \item[c)] There are values $\theta^*$ $(\approx 0.17172)$ and $\theta^{**}$ $(\approx 0.26586)$ such that the measures $\mu_5$ and $\mu_6$ are non-extreme if $\theta<\theta^*$ and are extreme if $\theta>\theta^{**}$.
   \item[d)] The measures $\mu_4$ and  $\mu_7$ are extreme (where they exist).
   \end{itemize}
      \end{thm}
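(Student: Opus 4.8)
The plan is to analyze the system (\ref{ti0}) for $m=2$, $k=2$, which with the reference choice $z_2=1$ reads
\[
z_0=\left(\frac{z_0+\theta z_1+\theta^{2}}{\theta^{2}z_0+\theta z_1+1}\right)^{2},\qquad
z_1=\left(\frac{\theta z_0+z_1+\theta}{\theta^{2}z_0+\theta z_1+1}\right)^{2}.
\]
The key structural input is the mirror symmetry $j\mapsto 2-j$, which acts on the reduced laws by the involution $(z_0,z_1)\mapsto(1/z_0,\,z_1/z_0)$; its fixed locus is $\{z_0=1\}$ (the symmetric solutions), and all remaining solutions occur in mirror pairs. First I would treat the symmetric case: setting $z_0=1$ makes the first equation an identity, and the second collapses to a single equation $z_1\big(1+\theta^{2}+\theta z_1\big)^{2}=(2\theta+z_1)^{2}$, a cubic in $z_1$ whose number of positive roots (one or three) is controlled by its discriminant and yields one family of symmetric TISGMs.

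For the non-symmetric solutions I would substitute $u=\sqrt{z_0}$, $v=\sqrt{z_1}$ to remove the outer squares, obtaining $uD=u^{2}+\theta v^{2}+\theta^{2}$ and $vD=\theta u^{2}+v^{2}+\theta$ with common factor $D=\theta^{2}u^{2}+\theta v^{2}+1$. Taking the two linear combinations $(u-\theta v)D=(1-\theta^{2})u^{2}$ and $(\theta u-v)D=-(1-\theta^{2})(v^{2}+\theta)$ eliminates $D$ by division and leaves a single polynomial relation between $u$ and $v$; eliminating one variable (e.g.\ by a resultant) produces a polynomial in one unknown with coefficients explicit in $\theta$. Counting its positive real roots as $\theta$ varies, and locating the values where roots collide, gives the two thresholds $\theta_c$ and $\theta_c'$ and the announced count $1\to3\to5\to6\to7$. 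I expect the principal obstacle here to be the bookkeeping of the elimination and the sign analysis of the resulting discriminants: one must show that exactly the collisions at $\theta_c\approx0.1414$ and $\theta_c'\approx0.2956$ occur (saddle-node/pitchfork bifurcations, the latter forced onto the symmetric axis), and that no further positive roots are gained or lost elsewhere. The factorizations induced by the involution are what make this tractable and let one write the solutions explicitly.

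\textbf{Part II (Extremality).} For each solution I would form the transition matrix of the associated tree-indexed Markov chain,
\[
P_{ij}=\frac{\theta^{|i-j|}z_j}{\sum_{s=0}^{2}\theta^{|i-s|}z_s},\qquad i,j\in\{0,1,2\},\ z_2=1,
\]
a $3\times3$ stochastic matrix with Perron eigenvalue $1$ and two further eigenvalues; write $\lambda_2=\lambda_2(\theta)$ for the one of second largest modulus. For the non-extremality statements I would invoke the Kesten--Stigum criterion: a TISGM is non-extreme as soon as $k\lambda_2^{2}>1$, i.e.\ $2\lambda_2^{2}>1$ here. Solving $2\lambda_2(\theta)^{2}=1$ pins down the thresholds $\bar{\bar\theta}\approx2.8765$ (beyond which $\mu_1$ is non-extreme) and $\theta^{*}\approx0.17172$ (below which $\mu_5,\mu_6$ are non-extreme), while $2\lambda_2^{2}>1$ is verified throughout the existence range of $\mu_2,\mu_3$, giving their non-extremality.

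Since the Kesten--Stigum bound says nothing below its threshold, the extremality statements require the complementary, and harder, input: an adaptation of the Martinelli--Sinclair--Weitz method. Here I would set up a one-step contraction estimate for the boundary-law recursion in a suitable metric, producing an explicit contraction coefficient, and then use the MSW recursive (tree-averaged) argument to conclude extremality once the effective branching-times-contraction quantity drops below $1$; solving the corresponding equality yields $\bar\theta\approx2.655$ (below which $\mu_1$ is extreme) and $\theta^{**}\approx0.26586$ (above which $\mu_5,\mu_6$ are extreme), and the estimate holds on the whole existence range of the non-symmetric pair $\mu_4,\mu_7$, giving their extremality. I expect this extremality side to be the main obstacle: the MSW estimate must be carried out for genuinely asymmetric three-state channels, and the two methods do not meet---this is exactly why the theorem leaves the windows $(\bar\theta,\bar{\bar\theta})$ and $(\theta^{*},\theta^{**})$ undecided rather than exhibiting a single sharp transition.
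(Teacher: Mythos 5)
Your proposal follows essentially the same route as the paper: its Part I proceeds by exactly your dichotomy --- the substitution $x=\sqrt{z_0}$, $y=\sqrt{z_1}$, the symmetric branch $x=1$ reducing to a cubic whose discriminant yields $\theta_c$, and the non-symmetric branch where the first equation factors to give $\theta y^2=(1-\theta^2)x-\theta^2(x^2+1)$, leading to a reciprocal quartic in $x$ that is solved via $\xi=x+1/x$ (precisely the involution-induced structure you anticipated, with the mirror pairs $(\mu_4,\mu_7)$ and $(\mu_5,\mu_6)$ satisfying $x_4x_7=x_5x_6=1$), and $\theta_c'$ coming from the discriminant of the resulting quadratic in $\xi$. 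Part II also matches the paper: Kesten--Stigum ($2\lambda_{2}^{2}>1$, checked on the explicit eigenvalues of the channel) gives the non-extremality claims, and the Martinelli--Sinclair--Weitz criterion $k\kappa\gamma<1$ --- with $\kappa$ computed exactly from the rows of the transition matrix and the model-specific bound $\gamma\leq \frac{|1-\theta^2|}{1+\theta^2}$ established by an explicit optimization over the simplex --- gives the extremality claims, the thresholds $\bar\theta$, $\bar{\bar\theta}$, $\theta^{*}$, $\theta^{**}$ being read off explicit one-variable functions of $\theta$, just as you describe.
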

      
  We shall find all solutions for the translation-invariant boundary laws in our model and 
  prove part I of Theorem \ref{tm}  
 in Section 3.  He we are helped by the nature of the binary tree which helps 
 to keep the order of polynomials which need to be solved bounded by $4$. 
   In Subsection 4.1 we shall give results concerning to non-extremality and in Subsection 4.2 we give conditions of extremality.  

\section{Case $k=m=2$: Full analysis of solutions}

Assuming $k=m=2$ the two-dimensional fixed point 
equation (\ref{ti0}) for the two components of the boundary law can be written
in terms of the convenient variables 
$x=\sqrt{z_0}$ and $y=\sqrt{z_1}$ in the form 

\begin{equation}\label{rs3.2a}
x={x^2+\theta y^2+\theta^2 \over \theta^2x^2+\theta y^2+1},
\end{equation}
\begin{equation}\label{rs3.2b}
 y={\theta x^2+y^2+\theta \over \theta^2x^2+\theta y^2+1}
\end{equation}
From the equation (\ref{rs3.2a}) we get $x=1$ or
\begin{equation}\label{y}
\theta y^2=(1-\theta^2)x-\theta^2(x^2+1).
\end{equation}
\begin{rk} \label{<1} Since $x>0$ we have that the equality (\ref{y}) can hold iff $\theta<1$.
\end{rk}
\subsection{Case: $x=1$.} In this case from the equation (\ref{rs3.2b}) we get
\begin{equation}\label{y3}
\theta y^3-y^2+(\theta^2+1)y-2\theta=0.
\end{equation}
Using Cardano's formula one can prove the following

\begin{lemma}\label{l1}
There exists a unique $\theta_c (\approx 0.1414)$ such that
\begin{itemize}
\item If $\theta<\theta_c$ then the equation (\ref{y3}) has three solutions $y_3<y_2<y_1$ which are positive.
\item If $\theta=\theta_c$ then the equation has two positive solutions $y_2<y_1$.
\item If $\theta>\theta_c$ then the equation has one solution $y_1>0$.
\end{itemize}
\end{lemma}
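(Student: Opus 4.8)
The plan is to study the number of positive roots of
$f(y)=\theta y^3-y^2+(\theta^2+1)y-2\theta$ as $\theta$ varies, and the first thing I would do is dispose of the positivity requirement once and for all by Descartes' rule of signs. The coefficient sequence $(\theta,-1,\theta^2+1,-2\theta)$ of $f$ has three sign changes, so $f$ has at most three positive roots; meanwhile $f(-y)=-\theta y^3-y^2-(\theta^2+1)y-2\theta$ has no sign change and hence $f$ has no negative root, while $f(0)=-2\theta<0<f(+\infty)$ guarantees at least one positive root always exists. Consequently \emph{every} real root of $f$ is automatically positive, and the three alternatives in the lemma correspond precisely to $f$ having three real roots, a double-plus-simple root, or a single real root.

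Thus everything reduces to counting the real roots of the cubic, which is governed by the sign of its discriminant $\Delta(\theta)$. I would compute $\Delta$ from Cardano's formula (equivalently, the standard cubic discriminant), obtaining
\begin{equation*}
\Delta(\theta)=-4\theta^7-12\theta^5-71\theta^4-12\theta^3+38\theta^2-12\theta+1 ,
\end{equation*}
so that three distinct (hence positive) roots occur exactly when $\Delta(\theta)>0$, a double root when $\Delta(\theta)=0$, and a single root when $\Delta(\theta)<0$. Since $\Delta(0)=1>0$ while the leading coefficient is $-4<0$, one has $\Delta(\theta)\to-\infty$, so $\Delta$ changes sign an odd number of times on $(0,\infty)$; identifying $\theta_c$ with the transition value then amounts to showing it changes sign exactly once.

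The heart of the argument, and the main obstacle, is therefore to prove that this degree-seven polynomial has a unique positive root $\theta_c$, with $\Delta>0$ on $(0,\theta_c)$ and $\Delta<0$ on $(\theta_c,\infty)$. I would do this in two stages: first show that $\Delta$ is strictly decreasing on an initial interval (checking $\Delta'<0$ there, which holds since $-12+76\theta$ is dominated by the negative lower-order terms), pinning down the single crossing at $\theta_c\approx 0.1414$; then show $\Delta<0$ for all larger $\theta$. The second stage is delicate because $\Delta'$ is not negative throughout $(0,\infty)$, there being a short range near $\theta\approx 0.26$ where $\Delta$ increases, so one must bound the corresponding local maximum of $\Delta$ and verify it stays negative; a clean alternative is a Sturm-sequence count certifying a single real root of $\Delta$ on $(0,\infty)$. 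A more geometric variant that avoids the full discriminant studies the critical points of $f$ directly: $f'(y)=3\theta y^2-2y+(\theta^2+1)$ has real roots iff $4-12\theta(\theta^2+1)\ge 0$, and when it does both are positive by Vieta (sum $2/(3\theta)$, product $(\theta^2+1)/(3\theta)$), the smaller being a local maximum; since $f(0)<0$ and the local-minimum value stays negative, the passage from three roots to one is governed precisely by the sign of $f$ at this local maximum, whose vanishing defines $\theta_c$.
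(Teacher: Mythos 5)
Your proposal is correct, and it is in substance the argument the paper leaves implicit: the paper's entire proof of Lemma~\ref{l1} is the single sentence ``Using Cardano's formula one can prove the following,'' and counting the real roots of a cubic via Cardano is precisely the discriminant-sign analysis you carry out. You supply two things the paper's one-liner silently needs. First, the Descartes step: Cardano counts \emph{real} roots, not positive ones, and your observation that $f(-y)$ has no sign changes while $f(0)=-2\theta<0$ shows every real root is automatically positive, which is what converts the real-root count into the positive-root count the lemma actually asserts. Second, the genuinely nontrivial point, which the paper never addresses: that the degree-seven discriminant $\Delta(\theta)=-4\theta^7-12\theta^5-71\theta^4-12\theta^3+38\theta^2-12\theta+1$ (your formula checks out against $18abcd-4b^3d+b^2c^2-4ac^3-27a^2d^2$) has a \emph{unique} positive zero. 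Your two-stage plan for this works: on $(0,3/19]$ one has $76\theta-12\le 0$, so every term of $\Delta'(\theta)=-28\theta^6-60\theta^4-284\theta^3-36\theta^2+76\theta-12$ is nonpositive, and since $\Delta(0)=1>0>\Delta(3/19)$ there is exactly one crossing, at $\theta_c\approx 0.1414$; and you correctly flag the delicate part of the tail, namely that $\Delta$ is not monotone there --- it has a shallow local maximum near $\theta\approx 0.27$ with value about $-0.1$ --- so one must either bound that maximum or run a Sturm count, both of which are routine for an explicit one-variable polynomial.

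Two small loose ends. To conclude that at $\theta=\theta_c$ there are exactly \emph{two} distinct solutions $y_2<y_1$, you should rule out a triple root: a triple root would force $b^2=3ac$, i.e.\ $3\theta(\theta^2+1)=1$, whose unique positive root is $\approx 0.305\ne\theta_c$, so the repeated root at $\theta_c$ is a double root (sitting at the local maximum of $f$, with the simple root to its right, consistent with the labeling in the lemma). And in your geometric variant the assertion that the local-minimum value of $f$ stays negative whenever $f'$ has real roots is true but is itself an unproved claim of the same flavor as the main one, so the discriminant route is the cleaner one to execute in full.
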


\subsection{Case: $x\ne 1$ and (\ref{y}) is satisfied.} By Remark \ref{<1} we should only consider the case $\theta<1$.
The equation (\ref{rs3.2b}) can be written as 
\begin{equation}\label{2b}
 y^2=\left({\theta x^2+y^2+\theta \over \theta^2x^2+\theta y^2+1}\right)^2
\end{equation}
In the case when the equality (\ref{y}) is satisfied
from the equation (\ref{2b}) we get
$$((1-\theta^2)x-\theta^2(x^2+1))\theta=\left({x\over x+1}\right)^2,$$
which is equivalent to
\begin{equation}\label{x4}
\theta^3 x^4+\theta(3\theta^2-1)x^3+(4\theta^3-2\theta+1)x^2+\theta(3\theta^2-1)x+\theta^3=0.
\end{equation}
Denoting $\xi=x+1/x$ from (\ref{x4}) we get
\begin{equation}\label{xi}
\theta^3\xi^2+\theta(3\theta^2-1)\xi+2\theta^3-2\theta+1=0.
\end{equation}
This equation has no solution if $D=\theta^2(\theta-1)(\theta^3+\theta^2+3\theta-1)<0$; it has a 
unique solution if $D=0$ and two solutions if $D>0$.

For $\theta<1$ we note that $D=0$ has a unique solution:
\begin{equation}
\label{tc}
\theta_c'={1\over 3}\left(\sqrt[3]{26+6\sqrt{33}}-{8\over \sqrt[3]{26+6\sqrt{33}}}-1\right) \approx 0.2956.
\end{equation}
Thus we have the following
\begin{itemize}
\item If $\theta\in(0,\theta_c')$ then the equation (\ref{xi}) has two solutions $\xi_1<\xi_2$ with
$$\xi_{1}={1-3\theta^2-\sqrt{(\theta-1)(\theta^3+\theta^2+3\theta-1)}\over 2\theta^2},  \ \ \xi_{2}={1-3\theta^2+\sqrt{(\theta-1)(\theta^3+\theta^2+3\theta-1)}\over 2\theta^2};$$
\item If $\theta=\theta_c'$ then the equation (\ref{xi}) has a unique solution $\xi_1={1-3\theta^2\over 2\theta^2}$;
\item If $\theta\in(\theta_c',1)$ then the equation (\ref{xi}) has no solution.
\end{itemize}

It is easy to see that $2<\xi_1<\xi_2$ for all $\theta<\theta_c'$.
This allows to  find all 4 positive solutions to
the equation (\ref{x4}) explicitly, i.e. we have
\begin{equation}\label{x4-7}\begin{array}{ll}
x_{4}={1\over 2}(\xi_2-\sqrt{\xi_2^2-4}), \ \ x_{5}={1\over 2}(\xi_1-\sqrt{\xi_1^2-4}),\\[3mm]
 x_{6}={1\over 2}(\xi_1+\sqrt{\xi_1^2-4}),\ \ x_{7}={1\over 2}(\xi_2+\sqrt{\xi_2^2-4}).
 \end{array}
 \end{equation}
In Fig. \ref{four} the graphs of $x_i$, $i=4,5,6,7$ are shown.
%\begin{figure}
 % \begin{center}
 %\includegraphics[width=8cm]{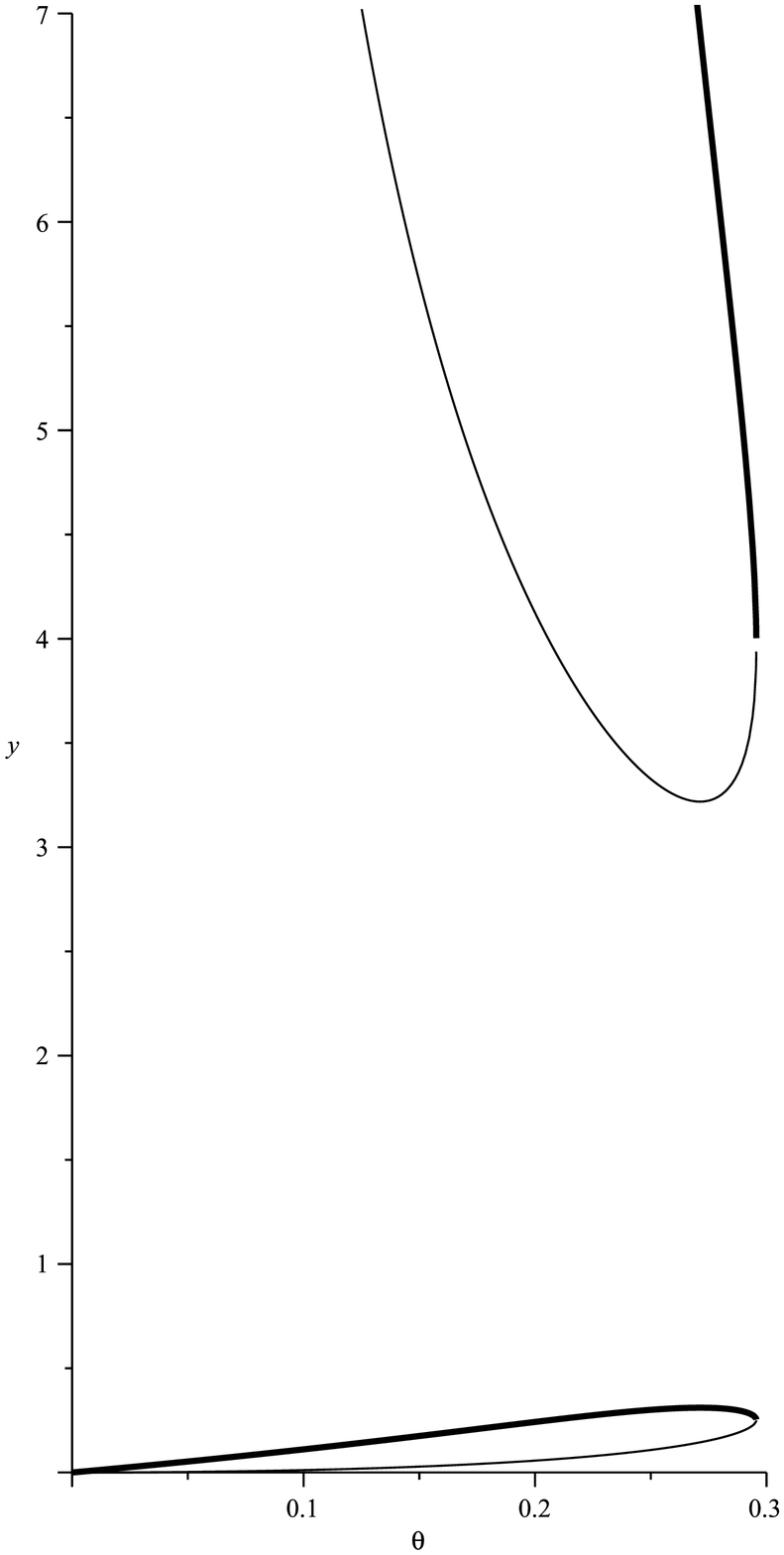}
 %\end{center}
 %\caption{ The graphs of functions $x_i=x_i(\theta)$, $i=4,5,6,7$. Upper thin curve is $x_6$ and lower thin curve is $x_4$. Upper bold curve is $x_7$ and lower bold curve is $x_5$.}\label{four}
 %\end{figure}
 
Now to find corresponding $y$ we need the following
\begin{lemma}\label{>} For each $x\in \{x_4, x_5,x_6,x_7\}$ and $\theta\leq \theta_c'$ the RHS of
        (\ref{y}) is positive, i.e.
        $$(1-\theta^2)x-\theta^2(x^2+1)>0.$$
\end{lemma}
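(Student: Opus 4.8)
The plan is to reduce the desired inequality to a single one-variable condition on the symmetric quantity $\xi=x+\tfrac1x$, and then to verify that condition for the explicit roots $\xi_1,\xi_2$ of (\ref{xi}). First I would record the conceptual reason the statement must hold. By construction each $x\in\{x_4,x_5,x_6,x_7\}$ is a positive root of (\ref{x4}), and (\ref{x4}) is nothing but the cleared-denominator form of the identity
\[
\theta\bigl((1-\theta^2)x-\theta^2(x^2+1)\bigr)=\left(\frac{x}{x+1}\right)^2,
\]
which was obtained by combining (\ref{y}) with the square of (\ref{rs3.2b}). Since $x>0$ forces the right-hand side to be strictly positive and $\theta>0$, the bracket on the left, which is exactly the RHS of (\ref{y}), must be positive. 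This already yields the lemma; the remaining effort is only to make the argument self-contained, independent of the exact derivation.

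For the direct verification I would divide the quantity $(1-\theta^2)x-\theta^2(x^2+1)$ by $x>0$, which converts its positivity into the equivalent inequality
\[
(1-\theta^2)-\theta^2\Bigl(x+\tfrac1x\Bigr)=(1-\theta^2)-\theta^2\xi>0,\qquad\text{that is,}\qquad \xi<\frac{1-\theta^2}{\theta^2}.
\]
Because $x_4,x_7$ satisfy $x+1/x=\xi_2$ while $x_5,x_6$ satisfy $x+1/x=\xi_1$, and since $\xi_1<\xi_2$, it suffices to check the single inequality $\xi_2<(1-\theta^2)/\theta^2$. Inserting the explicit value of $\xi_2$ from (\ref{xi}) and clearing the factor $2\theta^2>0$, the claim becomes $\sqrt{(\theta-1)(\theta^3+\theta^2+3\theta-1)}<1+\theta^2$. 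On $(0,\theta_c']$ the radicand is nonnegative, so both sides are nonnegative and I may square; squaring and expanding reduces everything to $\theta^4+2\theta^2-4\theta+1<(1+\theta^2)^2=\theta^4+2\theta^2+1$, i.e.\ to $-4\theta<0$, which holds for every $\theta>0$. This delivers the strict inequality for all $\theta\le\theta_c'$, the boundary $\theta=\theta_c'$ (where $\xi_1=\xi_2$) included.

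The computation is entirely routine once the reduction to $\xi$ is in place, so I expect no serious obstacle. The one point demanding care is the sign bookkeeping for the radicand $(\theta-1)(\theta^3+\theta^2+3\theta-1)$: here I would use the characterization of $\theta_c'$ through (\ref{tc}) to argue that $\theta-1<0$ and $\theta^3+\theta^2+3\theta-1\le 0$ on $(0,\theta_c']$, the latter vanishing precisely at $\theta_c'$, so that the square root is real and the squaring step is legitimate with both sides of the correct sign. This minor verification is the only place where one must invoke the definition of $\theta_c'$ rather than pure algebra.
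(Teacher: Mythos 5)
Your proof is correct, and it both subsumes and improves on the paper's. The paper's own argument is essentially your second, ``direct'' verification: it factors the quantity as $x\bigl(1-\theta^2(\xi_i+1)\bigr)$ (identical to your $x\bigl((1-\theta^2)-\theta^2\xi_i\bigr)$ after expansion), substitutes the explicit roots $\xi_1,\xi_2$ of (\ref{xi}) to obtain $\frac{1}{2}\bigl(1+\theta^2\pm\sqrt{(\theta-1)(\theta^3+\theta^2+3\theta-1)}\bigr)$, observes that the $+$ case is trivially positive, and dismisses the $-$ case with ``this number is positive, which can be easily checked.'' You tighten this in two ways: the monotonicity of the condition $\xi<(1-\theta^2)/\theta^2$ in $\xi$ collapses the two cases to the single check for $\xi_2$, and your explicit squaring --- the radicand expands to $\theta^4+2\theta^2-4\theta+1$, so comparison with $(1+\theta^2)^2$ reduces to $-4\theta<0$ --- supplies precisely the computation the paper omits; your sign bookkeeping for the radicand (both factors $\theta-1$ and $\theta^3+\theta^2+3\theta-1$ nonpositive on $(0,\theta_c']$, the latter vanishing exactly at $\theta_c'$) also correctly covers the boundary case $\theta=\theta_c'$, which the paper's wording ``$\theta<\theta_c'$'' glosses over. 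Your opening conceptual argument, by contrast, is genuinely different from anything in the paper and is the slicker route: since each $x_i$ is a positive root of (\ref{x4}), and (\ref{x4}) is equivalent for $x>0$ (multiplying by $(x+1)^2\neq 0$) to $\theta\bigl((1-\theta^2)x-\theta^2(x^2+1)\bigr)=\bigl(x/(x+1)\bigr)^2$, positivity of the right-hand side of (\ref{y}) is immediate from $x_i>0$ and $\theta>0$, with no case analysis and no use of the explicit formulas for $\xi_1,\xi_2$ at all. That one-line argument alone proves the lemma; what the explicit computation buys in exchange is the closed-form values $\frac{1}{2}\bigl(1+\theta^2\pm\sqrt{\cdots}\,\bigr)$ of the factor in question, though the paper does not in fact reuse them later.
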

\begin{proof} We shall use that $x\in \{x_4, x_5,x_6,x_7\}$:
$$(1-\theta^2)x-\theta^2(x^2+1)=x-\theta^2(x^2+x+1)$$ $$=x\left(1-\theta^2\left(x+{1\over x}+1\right)\right)=
x(1-\theta^2(\xi_i+1)), \ \ i=1,2.$$
In case $i=1$ we have
$$1-\theta^2(\xi_1+1)={1\over 2}(1+\theta^2+\sqrt{(\theta-1)(\theta^3+\theta^2+3\theta-1)})$$
which is positive for any $\theta<\theta_c'$.

For $i=2$ we have
$$1-\theta^2(\xi_2+1)={1\over 2}(1+\theta^2-\sqrt{(\theta-1)(\theta^3+\theta^2+3\theta-1)})$$
this number is positive, which can be easily checked for any $0<\theta<\theta_c'$.
\end{proof}
Using this lemma we can define
\begin{equation}\label{y4-7}
y_{i}={1\over \sqrt{\theta}}\sqrt{(1-\theta^2)x_i-\theta^2(x_i^2+1)}, \ \ i=4,5,6,7.
\end{equation}
In Fig.\ref{y1-7} the graphs of all $y_i$, $i=1,2,...,7$ are shown. 
 %\begin{figure} 
 %\begin{center}
 %\includegraphics[width=12cm]{}
 %\end{center}
 %\caption{
  %The graphs of functions $y_i=y_i(\theta)$, $i=1,2,...,7$. }\label{y1-7}
 %\end{figure}
 
 Summarizing we get the full characterization of solutions:
 \begin{pro}\label{p2} The 
 set of solutions to the system (\ref{rs3.2a}), (\ref{rs3.2b}) 
 changes under variations of the parameter $\theta$ is  the following way.  
 There exist $\theta_c (\approx 0.1414)$ and $\theta_c'$ (given by (\ref{tc})) such that
 \begin{itemize}
 \item If $\theta>\theta_c'$ then the system has a unique solution $v_1=(1,y_1)$;
  \item If $\theta=\theta_c'$ then the system has three solutions $v_1=(1,y_1)$, $v_4=(x_4,y_4)$, $v_6=(x_6,y_6)$;
  \item If $\theta_c<\theta<\theta_c'$ then the system has five solutions $v_1=(1,y_1)$, $v_i=(x_i,y_i)$, $i=4,5,6,7$;
  \item If $\theta=\theta_c$ then the system has six solutions $v_1=(1,y_1)$, $v_i=(x_i,y_i)$, $i=3,4,5,6,7$;
  \item If $\theta<\theta_c$ then the system has seven solutions $v_i=(x_i,y_i)$, $i=1,2,3,4,5,6,7$,
   \end{itemize}
 where $y_i$, $i=1,2,3$ are solutions of the equation (\ref{y3}) which can be given explicitly by Cardano's formula, $x_1=x_2=x_3=1$, $x_i$ and $y_i$ for $i=4,5,6,7$ are given by formulas (\ref{x4-7}) and (\ref{y4-7}).  \, (See Fig. \ref{four} and \ref{y1-7} for graphs of these functions.)
 \end{pro}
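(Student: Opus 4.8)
The plan is to assemble Proposition \ref{p2} from the case analysis already carried out, treating the two exhaustive and mutually exclusive cases $x=1$ and $x\neq 1$ separately and then adding the solution counts in each parameter regime. First I would record that, by the factorization of (\ref{rs3.2a}) into the alternative ``$x=1$ or (\ref{y})'', every solution of the system falls into exactly one of two families: the line $x=1$, and the locus where (\ref{y}) holds with $x\neq 1$. Since these families are disjoint by definition, the total number of solutions in any regime is the sum of the counts coming from each.

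For the case $x=1$ I would simply invoke Lemma \ref{l1}: substituting $x=1$ into (\ref{rs3.2b}) yields the cubic (\ref{y3}), whose positive roots are counted there --- three ($y_3<y_2<y_1$) for $\theta<\theta_c$, two for $\theta=\theta_c$, and one for $\theta>\theta_c$. These produce the solutions with first coordinate $1$, and I would fix the labelling so that the root persisting for all $\theta$ is $y_1$ and the pair born as $\theta$ decreases through $\theta_c$ is $\{y_2,y_3\}$; at $\theta=\theta_c$ these coalesce into one double root, retained under the label $y_3$, so that the surviving $x=1$ solutions there are precisely $v_1$ and $v_3$.

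For the case $x\neq 1$, by Remark \ref{<1} one needs $\theta<1$. The one point to establish cleanly here is that passing from (\ref{rs3.2b}) to its square (\ref{2b}) introduces no spurious solutions: since $x,y,\theta>0$, both $y$ and the right-hand side of (\ref{rs3.2b}) are positive, so on the positive quadrant (\ref{rs3.2b}) and (\ref{2b}) are genuinely equivalent. Combining (\ref{2b}) with (\ref{y}) gives the palindromic quartic (\ref{x4}), and the substitution $\xi=x+1/x$ reduces it to the quadratic (\ref{xi}). The discriminant computation already displayed shows that (\ref{xi}) has two roots $\xi_1<\xi_2$ for $\theta<\theta_c'$, one for $\theta=\theta_c'$, and none for $\theta_c'<\theta<1$; together with the strict inequality $2<\xi_1<\xi_2$ this yields, for each admissible $\xi_i$, a reciprocal pair $x=\tfrac12(\xi_i\mp\sqrt{\xi_i^2-4})$ of distinct positive roots, none equal to $1$ (since $\xi_i>2$ strictly), hence genuinely in the case $x\neq 1$ and disjoint from the $x=1$ family. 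This gives four solutions $x_4,x_5,x_6,x_7$ for $\theta<\theta_c'$ and the two $x_4,x_6$ at $\theta=\theta_c'$. Lemma \ref{>} guarantees that the right-hand side of (\ref{y}) is positive at each of these $x_i$ for $\theta\leq\theta_c'$, so the corresponding $y_i$ in (\ref{y4-7}) are real and positive and each $(x_i,y_i)$ is a bona fide solution.

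It then remains only to add the two counts in each regime: $1+0=1$ for $\theta>\theta_c'$; $1+2=3$ at $\theta=\theta_c'$; $1+4=5$ for $\theta_c<\theta<\theta_c'$; $2+4=6$ at $\theta=\theta_c$; and $3+4=7$ for $\theta<\theta_c$, which is exactly the asserted statement. The only genuinely delicate points are the bookkeeping at the two degenerate parameter values --- checking that the coalescing roots of the cubic at $\theta=\theta_c$ and of the quadratic at $\theta=\theta_c'$ are each counted once --- together with confirming the strict inequality $\xi_i>2$, which simultaneously secures reality, positivity, and the separation $x_i\neq 1$ from the first family. I expect these to be the main, though modest, obstacles, the remainder being a direct citation of Lemmas \ref{l1} and \ref{>} and of the discriminant analysis.
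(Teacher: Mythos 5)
Your proposal is correct and takes essentially the same route as the paper, whose proof of Proposition \ref{p2} is precisely the assembly of the Section~3 case analysis: Lemma \ref{l1} for the $x=1$ branch, Remark \ref{<1} plus the quartic (\ref{x4}) reduced via $\xi=x+1/x$ to (\ref{xi}) with its discriminant analysis and Lemma \ref{>} for the $x\neq 1$ branch, followed by summing the counts in each regime. The extra checks you make explicit --- equivalence of (\ref{rs3.2b}) and (\ref{2b}) by positivity, strictness of $\xi_i>2$ securing $x_i\neq 1$ and disjointness of the two families, and the bookkeeping of the coalesced roots at $\theta_c$ and $\theta_c'$ --- are left implicit in the paper but do not alter the argument.
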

 
 As an immediate corollary to Propositions \ref{rsp2.1} and \ref{p2} we get
part I of the Theorem \ref{tm}, 
where we denote by $\mu_i$ the TISGM corresponding to $v_i$, $i=1,...,7$.

\section{Tree-indexed Markov chains of TISGMs.}

A tree-indexed Markov chain is defined as follows. Suppose we are 
given a tree with vertices set $V$, and a probability measure $\nu$ 
and a transition matrix $\mathbb{P}=(P_{ij})_{i,j\in \Phi}$ on the
single-site space which is here the 
finite set $\Phi=\{0,1,\dots,m\}$. We can obtain a tree-indexed
Markov chain $X : V \to \Phi$ by choosing $X(x^0)$ according
to $\nu$ and choosing $X(v)$, for each vertex $v\ne x^0$, using
the transition probabilities given the value of its parent,
independently of everything else. See Definition 12.2 in \cite{Ge} for a detailed definition.

We note that a TISGM corresponding to a vector $v=(x,y)\in \mathbb{R}^2$ (which is solution to the system (\ref{rs3.2a}),(\ref{rs3.2b})) is a tree-indexed Markov chain
with states $\{0,1,2\}$ and transition probabilities matrix:
\begin{equation}\label{m} {\mathbb P}=\left(\begin{array}{ccc}
{x^2\over x^2+\theta y^2+\theta^2}&{\theta y^2\over x^2+\theta y^2+\theta^2}&{\theta^2\over x^2+\theta y^2+\theta^2}\\[3mm]
{\theta x^2\over \theta x^2+y^2+\theta}&{y^2\over \theta x^2+y^2+\theta}&{\theta\over \theta x^2+y^2+\theta}\\[3mm]
{\theta^2x^2\over \theta^2x^2+\theta y^2+1}&{\theta y^2\over \theta^2x^2+\theta y^2+1}&{1\over \theta^2x^2+\theta y^2+1}
\end{array}
\right).
\end{equation}
Since $(x,y)$ is a solution to the system (\ref{rs3.2a}),(\ref{rs3.2b}) this matrix can be written in the following form
\begin{equation}\label{m1} {\mathbb P}={1\over Z}\left(\begin{array}{ccc}
x&{\theta y^2\over x}&{\theta^2\over x}\\[2mm]
{\theta x^2\over y}&y&{\theta\over y}\\[2mm]
\theta^2x^2&\theta y^2&1
\end{array}
\right),
\end{equation}
where $Z=\theta^2x^2+\theta y^2+1.$

Simple calculations show that the matrix (\ref{m1}) has three eigenvalues: 1 and
$$\lambda_1(x,y,\theta)={x+y+1-Z-\sqrt{(1+x+y-3Z)^2-4\theta^2Zx^{-1}(1+x^3+y^3)}\over 2Z};
$$
\begin{equation}\label{ev}
\lambda_2(x,y,\theta)=
{x+y+1-Z+\sqrt{(1+x+y-3Z)^2-4\theta^2Zx^{-1}(1+x^3+y^3)}\over 2Z},
\end{equation}
where $\lambda_1$ and $\lambda_2$ are solutions to
\begin{equation}\label{qe}
Zx(1-\lambda)^2+x(1+x+y-3Z)(1-\lambda)+\theta^2(1+x^3+y^3)=0.
\end{equation}

\subsection{Conditions of non-extremality}

It is known (see, e.g., \cite{Ge}) that for all
$\beta >0$, the Gibbs measures  form a non-empty
convex compact set in the space of probability measures.
Extreme measures, i.e., extreme points of this set are
associated with pure phases.
Furthermore, any Gibbs measure is an integral of
extreme ones (the extreme decomposition).
Thus extreme points are important to describe the convex set of all Gibbs measures. 
In this subsection we are going to find the regions of the parameter $\theta$ where the 
TISGMs $\mu_i$, $i=1,\dots,7$ are not extreme in the set of all Gibbs measures (including 
the non-translation invariant ones). 

It is known that a sufficient (Kesten-Stigum) condition for non-extremality of a Gibbs measure $\mu$ corresponding
to the matrix ${\mathbb P}$ on a Cayley tree of order $k\geq 1$ is that
$k\lambda^2_2>1$, where $\lambda_2$ is the second largest (in absolute value) eigenvalue of ${\mathbb P}$ \cite{Ke}.
We are going to use this condition for TISGMs $\mu_i$, $i=1,\dots,7$.
We have all solutions of the system (\ref{rs3.2a}),(\ref{rs3.2b}) and the eigenvalues of the matrix ${\mathbb P}$ in the explicit form. But these quantities have very complicated long form which are functions of the one real variable $\theta$ only, and will use a computer for a numerical 
investigation of the relevant properties of the function. 

Let us denote
$$\lambda_{\max,i}(\theta)=\max\{|\lambda_1(x_i,y_i,\theta)|, |\lambda_2(x_i,y_i,\theta)|\}, \ \ i=1,\dots,7.$$

Using a computer one can obtain the graphs of the discriminant of the equation (\ref{qe}) and note 
that
$\lambda_1(x_i,y_i,\theta)$ and $\lambda_2(x_i,y_i,\theta)$ are
real for any $i=1,\dots,7.$
Moreover, we have
$$\lambda_{\max,i}(\theta)=\left\{\begin{array}{llll}
|\lambda_2(x_1,y_1,\theta)|, \ \ \mbox{if} \ \ i=1, \ \ \theta<1\\[2mm]
|\lambda_1(x_1,y_1,\theta)|, \ \ \mbox{if} \ \ i=1, \ \ \theta>1\\[2mm]
|\lambda_2(x_i,y_i,\theta)|, \ \ \mbox{if} \ \ i=2,3,4,5,6,7.
\end{array}
\right.$$
Denote
$$\eta_i(\theta)=2\lambda^2_{\max,i}(\theta)-1,\ \ i=1,\dots,7.$$

In Fig. \ref{ny1ny2} - Fig.\ref{ny7} the graphs of the functions $\eta_i$ are shown. Note that 
 these are only functions of $\theta$ and do not have any additional parameter. From the 
 graphs one can see the regions of $\theta$ for which the corresponding function is positive.    

Thus we obtained 
the following proposition (which gives the results of the part II of Theorem \ref{tm} 
concerning to the non-extremality). Note that the parameter values we present might not 
be optimal, as it is not clear whether the
 Kesten-Stigum condition is optimal in our model, but there are the optimal values 
 which are provided by the Kesten-Stigum condition. 

\begin{pro}\label{tne}
\begin{itemize}

\item[1)] There exists a value $\bar{\bar{\theta}} \ \ (\approx 2.8765)$ such that the measure $\mu_1$ is non-extreme for any $\theta>\bar{\bar{\theta}}$.

\item[2)] For TISGMs $\mu_i$, $i=2,3$ the Kesten-Stigum condition is always satisfied, i.e. these measures are non-extreme for all values of $\theta$ for which they exist.

\item[3)] There exists  a value 
$\theta^*\ \ (\approx 0.171719)$ such that the measures $\mu_5$ and $\mu_6$ 
are non-extreme for any $\theta<\theta^*$.

\item[4)] For TISGMs $\mu_4$ and $\mu_7$ the Kesten-Stigum condition is never hold.
\end{itemize}
\end{pro}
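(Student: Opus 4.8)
The plan is to treat each measure through the single-variable function $\eta_i(\theta)=2\lambda_{\max,i}^2(\theta)-1$ and to locate the set where it is positive, which by the Kesten--Stigum criterion (here $k=2$) forces non-extremality. All the ingredients are already available: Proposition \ref{p2} lists every solution $(x_i,y_i)$ of (\ref{rs3.2a})--(\ref{rs3.2b}) explicitly, and $\lambda_{\max,i}$ is read off from the quadratic (\ref{qe}) via (\ref{ev}). Substituting the explicit $(x_i,y_i)$ into (\ref{qe}) turns each $\eta_i$ into a (lengthy) function of $\theta$ alone, so the whole statement reduces to a sign analysis of these functions.

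First I would use the mirror symmetry $j\mapsto 2-j$ to halve the work. This reflection sends a normalized law $(x,y)$ to $(1/x,\,y/x)$, and the relations $x_4x_7=x_5x_6=1$ together with $\theta y_i^2=(1-\theta^2)x_i-\theta^2(x_i^2+1)$ (equation (\ref{y})) show that $v_4,v_7$ and $v_5,v_6$ are exactly mirror pairs, with $y_4=y_7/x_7$ and $y_5=y_6/x_6$. Under the reflection the transition matrix (\ref{m1}) is conjugated by the order-reversing permutation of $\{0,1,2\}$, and conjugate matrices share a spectrum; hence $\lambda_{\max,4}\equiv\lambda_{\max,7}$ and $\lambda_{\max,5}\equiv\lambda_{\max,6}$, so $\eta_4\equiv\eta_7$ and $\eta_5\equiv\eta_6$. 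This collapses parts 3) and 4) to one representative each and explains the pairing in the statement.

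Next, for the symmetric branch $x=1$ (the measures $\mu_1,\mu_2,\mu_3$) I would set $x=1$ in (\ref{qe}), obtaining a quadratic in $\lambda$ with coefficients polynomial in $y$ and $\theta$, and then repeatedly apply the defining cubic (\ref{y3}) to reduce all powers of $y$, producing a manageable closed form for $\lambda_{\max}$ along this branch into which the Cardano roots $y_1,y_2,y_3$ are inserted. For the nonsymmetric representatives I would instead carry $x+1/x=\xi_j$ and (\ref{y4-7}) through the computation, eliminating $y_i$ via (\ref{y}). For the two \emph{always} cases, parts 2) and 4), the aim is to write $\eta_i(\theta)$ as a single rational-algebraic expression and show its numerator has constant sign on the existence interval of that solution: positive for $i=2,3$ on $(0,\theta_c]$, negative for $i=4,7$ on $(0,\theta_c']$, which amounts to verifying that one polynomial factor in $\theta$ has no root in the relevant range. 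For the two \emph{threshold} cases, parts 1) and 3), the aim is to show $\eta_i$ changes sign exactly once; clearing radicals from $2\lambda_{\max,i}^2(\theta)=1$ yields a polynomial equation in $\theta$ whose unique relevant root furnishes $\bar{\bar{\theta}}$ (for $\mu_1$ in the antiferromagnetic range $\theta>1$) and $\theta^*$ (for $\mu_5\equiv\mu_6$).

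The main obstacle is purely the algebraic bulk: even after the symmetry reduction and the substitutions above, $\eta_i(\theta)$ remains a long expression, and the delicate point is to control its sign rigorously rather than merely graphically. Concretely, the hard steps are (a) proving the absence of any sign change in parts 2) and 4), i.e.\ that the relevant polynomial factor of the numerator of $\eta_i$ does not vanish on the whole existence interval, and (b) establishing uniqueness of the crossing in parts 1) and 3), which needs either a monotonicity argument for $\eta_i$ near the transition or a real-root count for the polynomial extracted from $2\lambda_{\max,i}^2=1$. For these I would isolate the relevant polynomial from the explicit formulas and confirm the root count and sign pattern with computer algebra, matching the quoted thresholds $\bar{\bar{\theta}}\approx 2.8765$ and $\theta^*\approx 0.171719$.
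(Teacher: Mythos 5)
Your proposal is correct, and its backbone is the same as the paper's: reduce everything to the Kesten--Stigum criterion $2\lambda_{\max,i}^2(\theta)>1$ and analyze the sign of the single-variable functions $\eta_i(\theta)$ built from the explicit solutions of Proposition \ref{p2} and the quadratic (\ref{qe}). You differ in two ways, both of which strengthen rather than contradict the paper. First, your mirror-symmetry reduction is genuine and checks out: $x_4,x_7$ (resp.\ $x_5,x_6$) are the two roots of $x^2-\xi_2x+1=0$ (resp.\ with $\xi_1$), so $x_4x_7=x_5x_6=1$, and dividing (\ref{y}) by $x^2$ shows the map $(x,y)\mapsto(1/x,\,y/x)$ preserves the solution set with $y_7=y_4/x_4$ and $y_6=y_5/x_5$; one verifies directly from (\ref{m}) that the channel of the reflected law is the conjugate $\Pi\,{\mathbb P}\,\Pi^{-1}$ under the flip $j\mapsto 2-j$, whence $\eta_4\equiv\eta_7$ and $\eta_5\equiv\eta_6$. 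The paper never states or uses this identity --- it simply plots all seven functions separately (its separate figures are consistent with the pairing, and its later $\kappa$-computation treats $i=6,7$ together only incidentally). Second, the paper's actual justification of the sign pattern \emph{is} the graphical/numerical inspection of Fig.~\ref{ny1ny2}--Fig.~\ref{ny7}: it says explicitly that the expressions are too long and that a computer is used, so your plan to clear radicals and certify root counts by computer algebra would, if executed, be \emph{more} rigorous than the published argument, not less. Two small cautions: on the symmetric branch you must track that $\lambda_{\max,1}=|\lambda_2|$ for $\theta<1$ but $\lambda_{\max,1}=|\lambda_1|$ for $\theta>1$ (this switch matters precisely in the antiferromagnetic regime $\theta>1$ where $\bar{\bar{\theta}}$ lives --- the paper handles it via the graph of $\eta_1(1/\theta)$, which also supplies the monotonicity on $(1,\infty)$ that your uniqueness-of-crossing step needs); and $\mu_2$ exists only for $\theta<\theta_c$ while $\mu_3$ exists for $\theta\le\theta_c$, so state the positivity interval for part 2) accordingly.
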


%\begin{figure}
%\begin{center}
%\includegraphics[width=5.6cm]{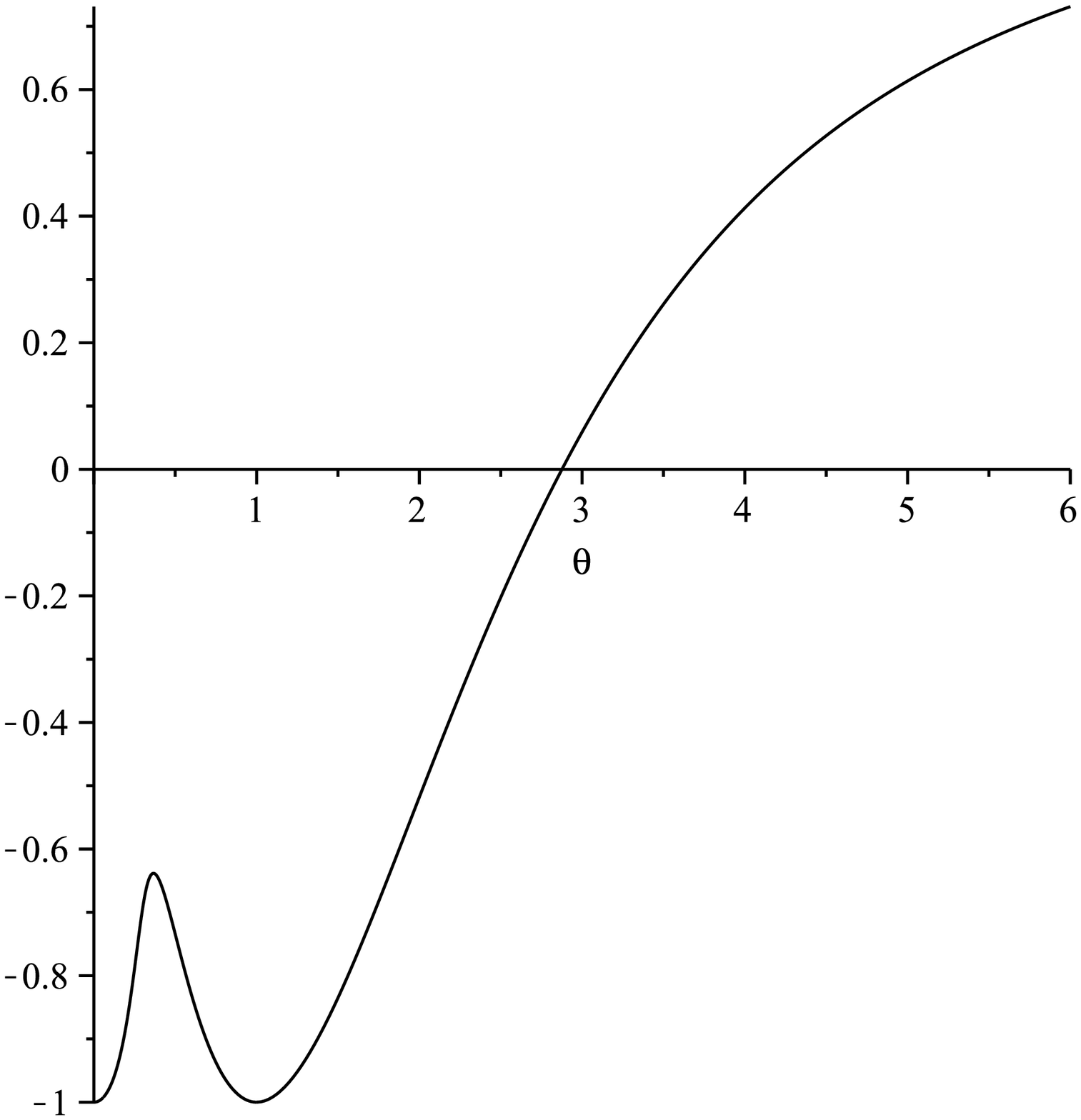} \ \ \ \ \
%\includegraphics[width=5.6cm]{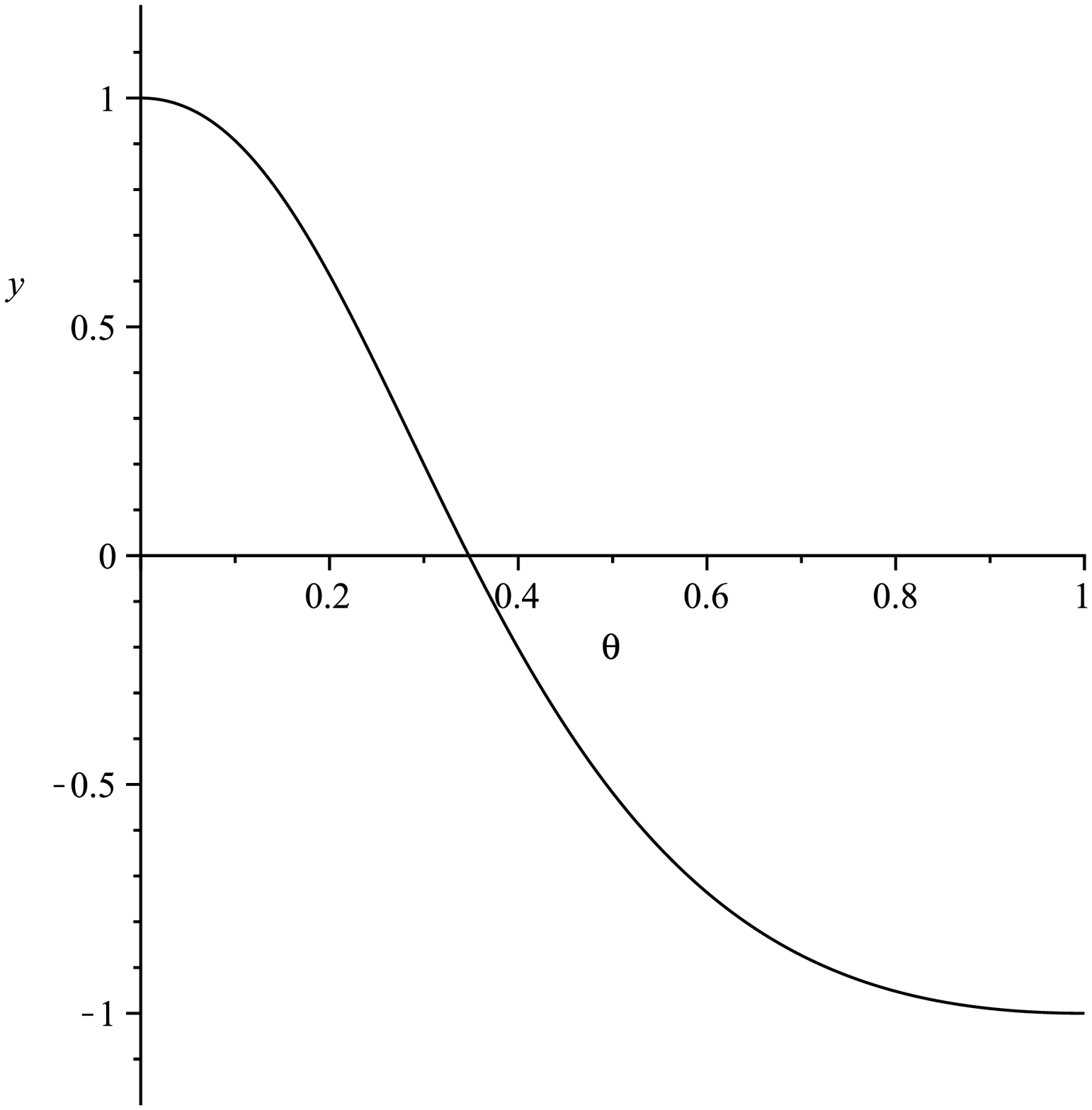}\ \ \
%\end{center}
%\caption{The graphs of functions $\eta_1(\theta)$ (left) and  $\eta_1(1/\theta)$ (right) where $\theta\in(0,1)$ to see the behaviour of $\eta_1$ on the infinite set $(1,\infty)$.}\label{ny1ny2}
 %\end{figure}
 %\begin{figure}
 %\begin{center}
%\includegraphics[width=5.6cm]{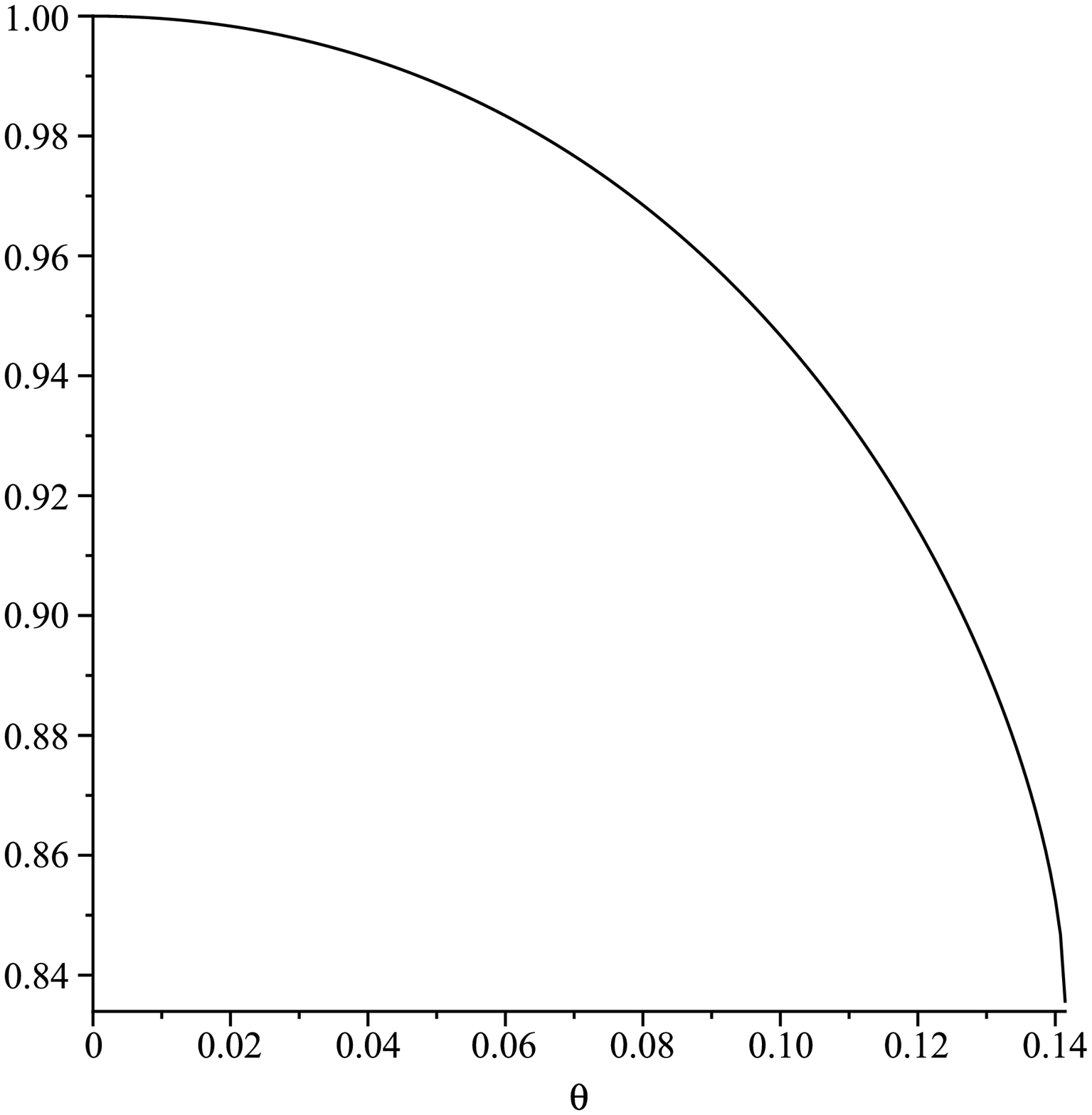} \ \ \ \ \
%\includegraphics[width=5.6cm]{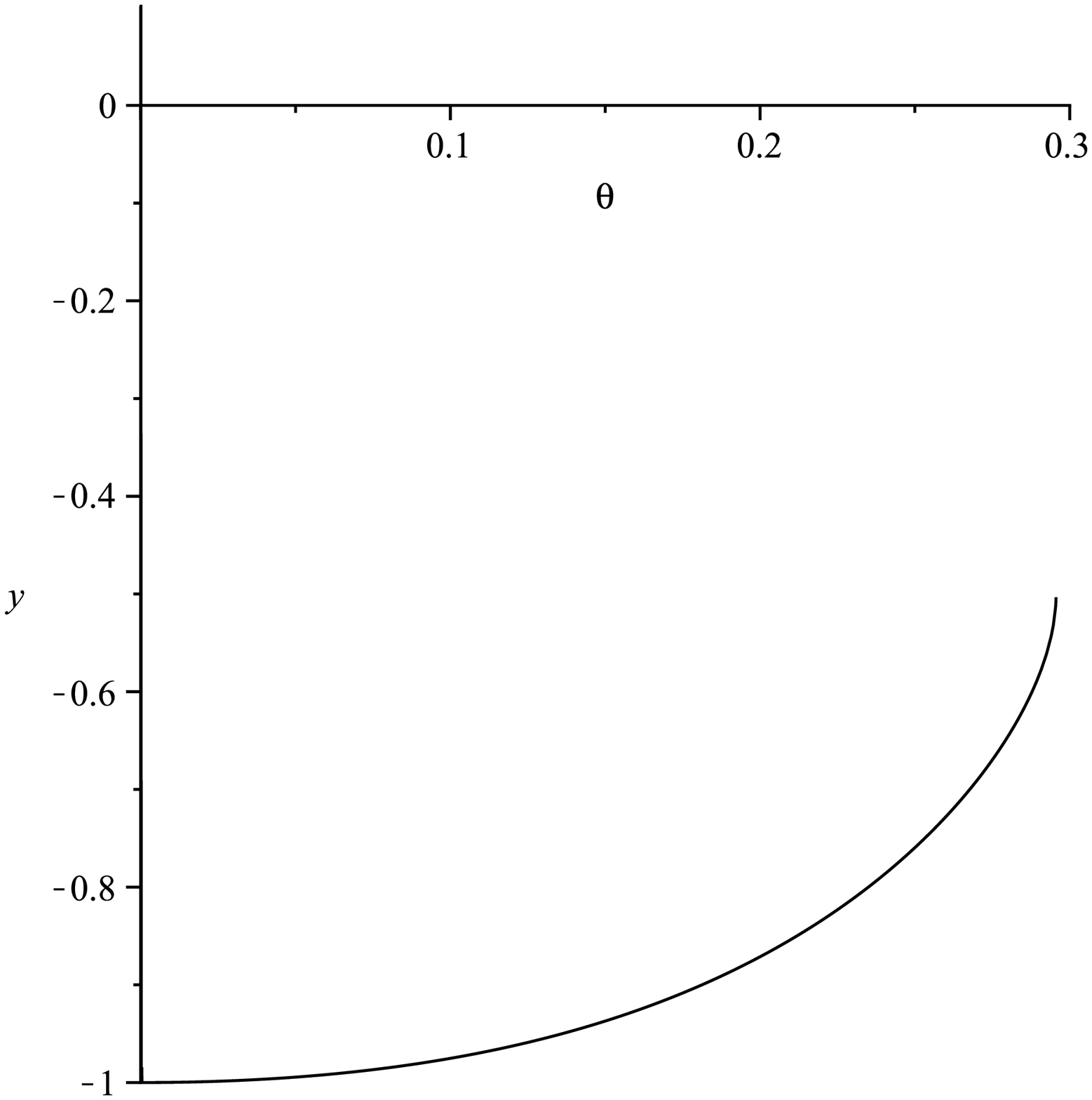}
%\end{center}
%\caption{The graphs of functions $\eta_3(\theta)$ (left) and $\eta_4(\theta) $ (right).}\label{ny3ny4}
 % \end{figure}
 %\begin{figure}
 %\begin{center}
%\includegraphics[width=5.6cm]{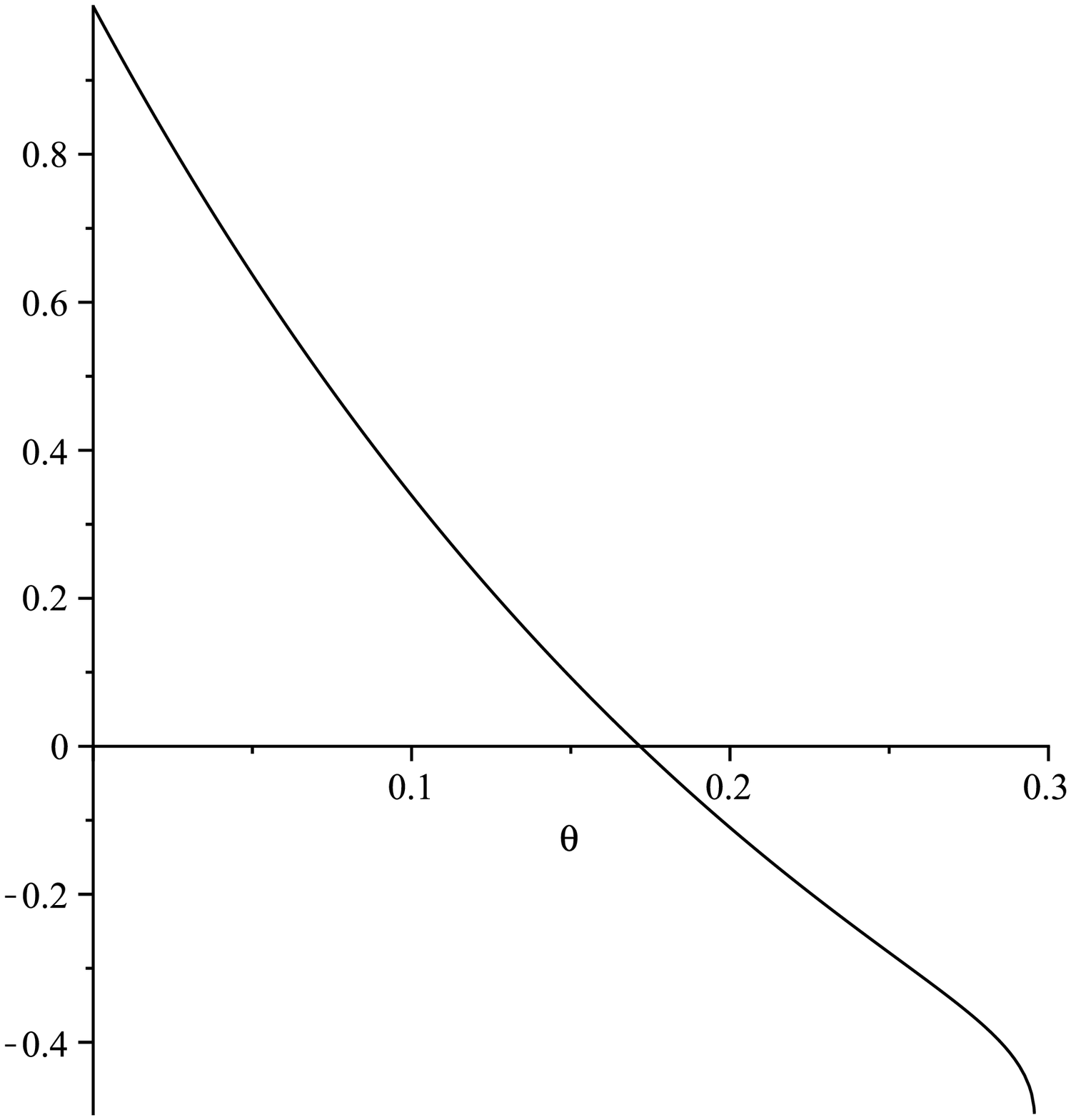}\ \ \ \
%\includegraphics[width=5.6cm]{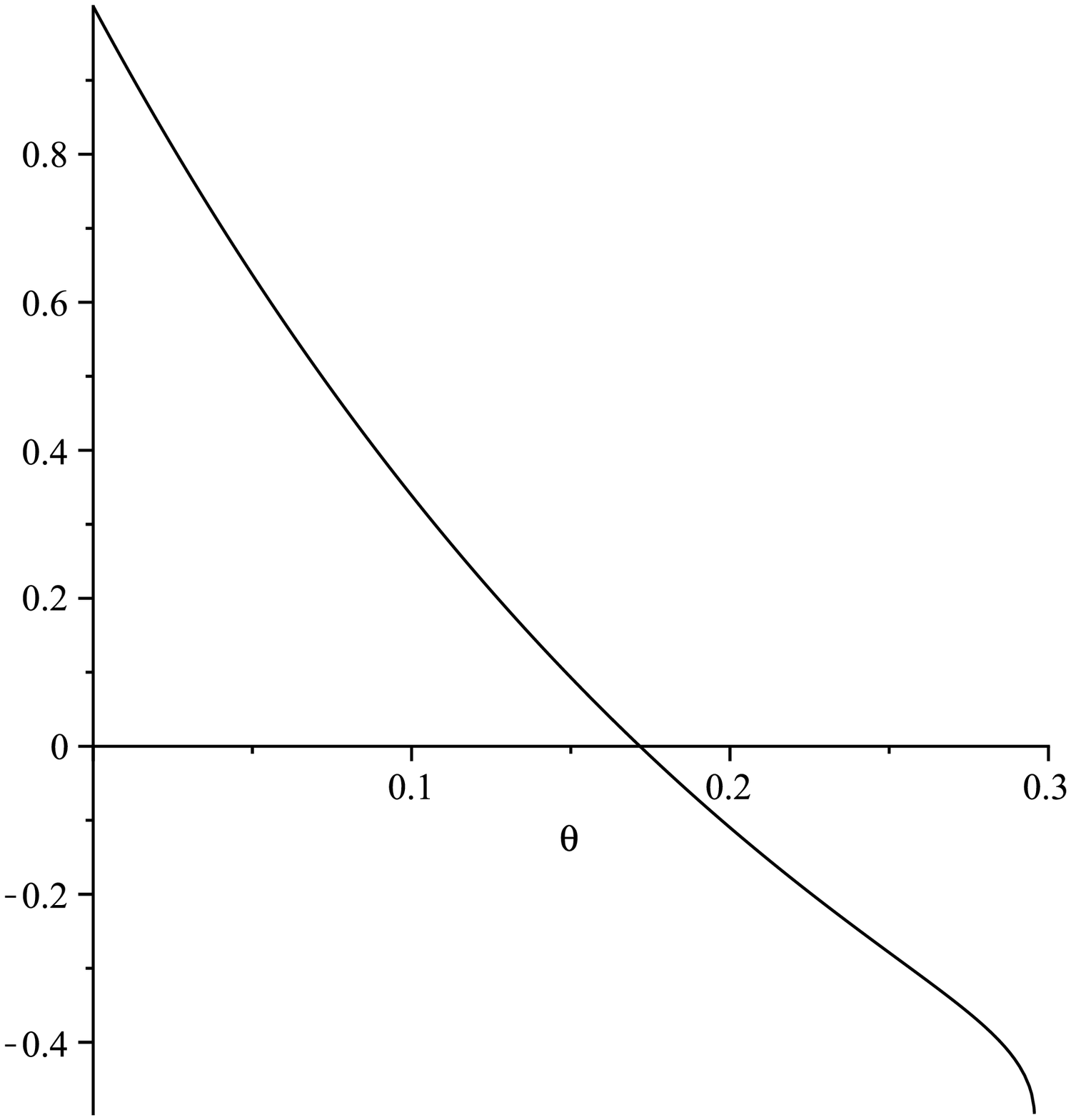}
%\end{center}
%\caption{The graphs of functions $\eta_5(\theta)$ (left) and $\eta_6(\theta)$ (right).}\label{ny5ny6}
 %\end{figure}
 %\begin{figure}
 %\begin{center}
 %\includegraphics[width=5.6cm]{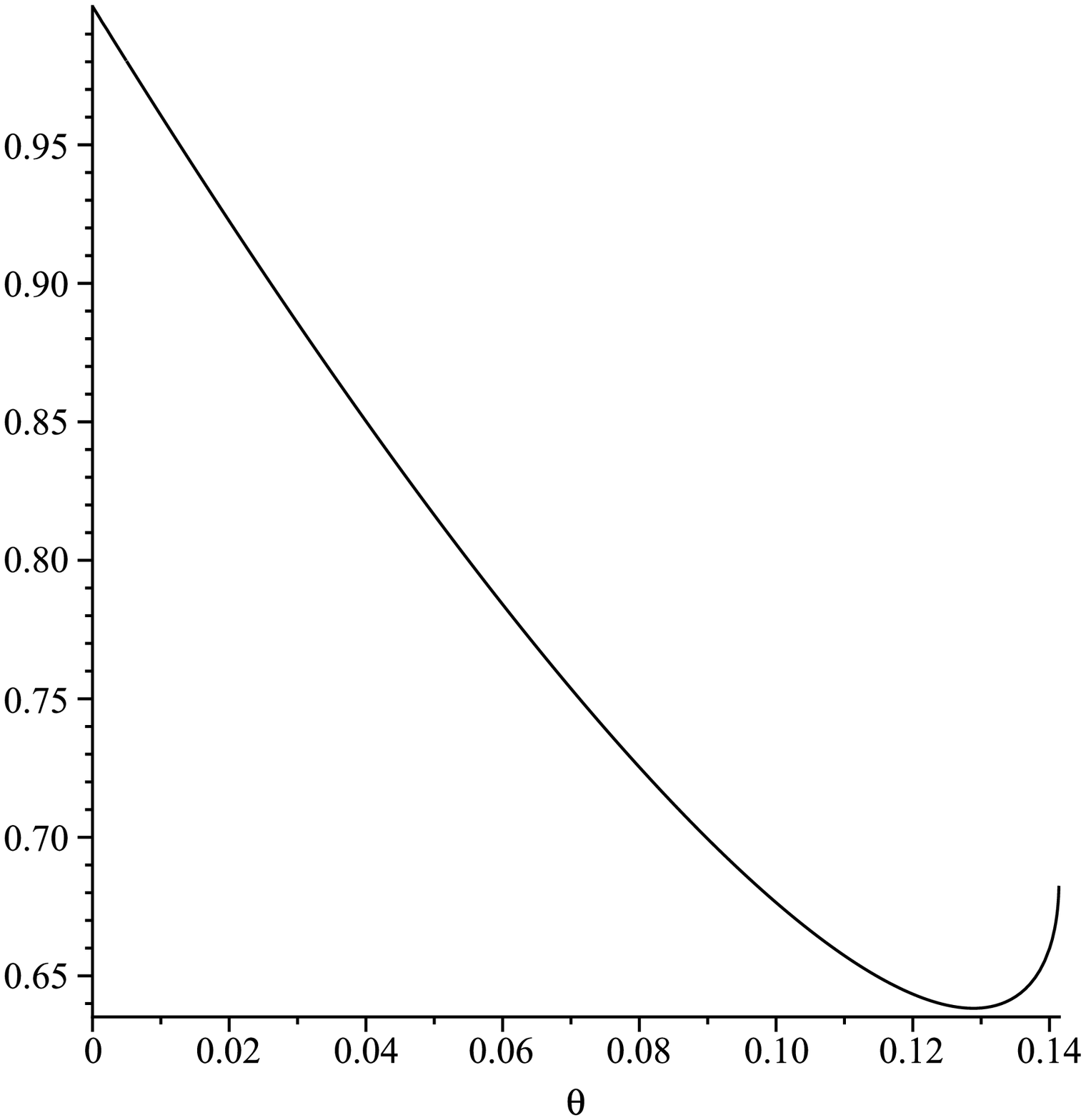}
%\includegraphics[width=6cm]{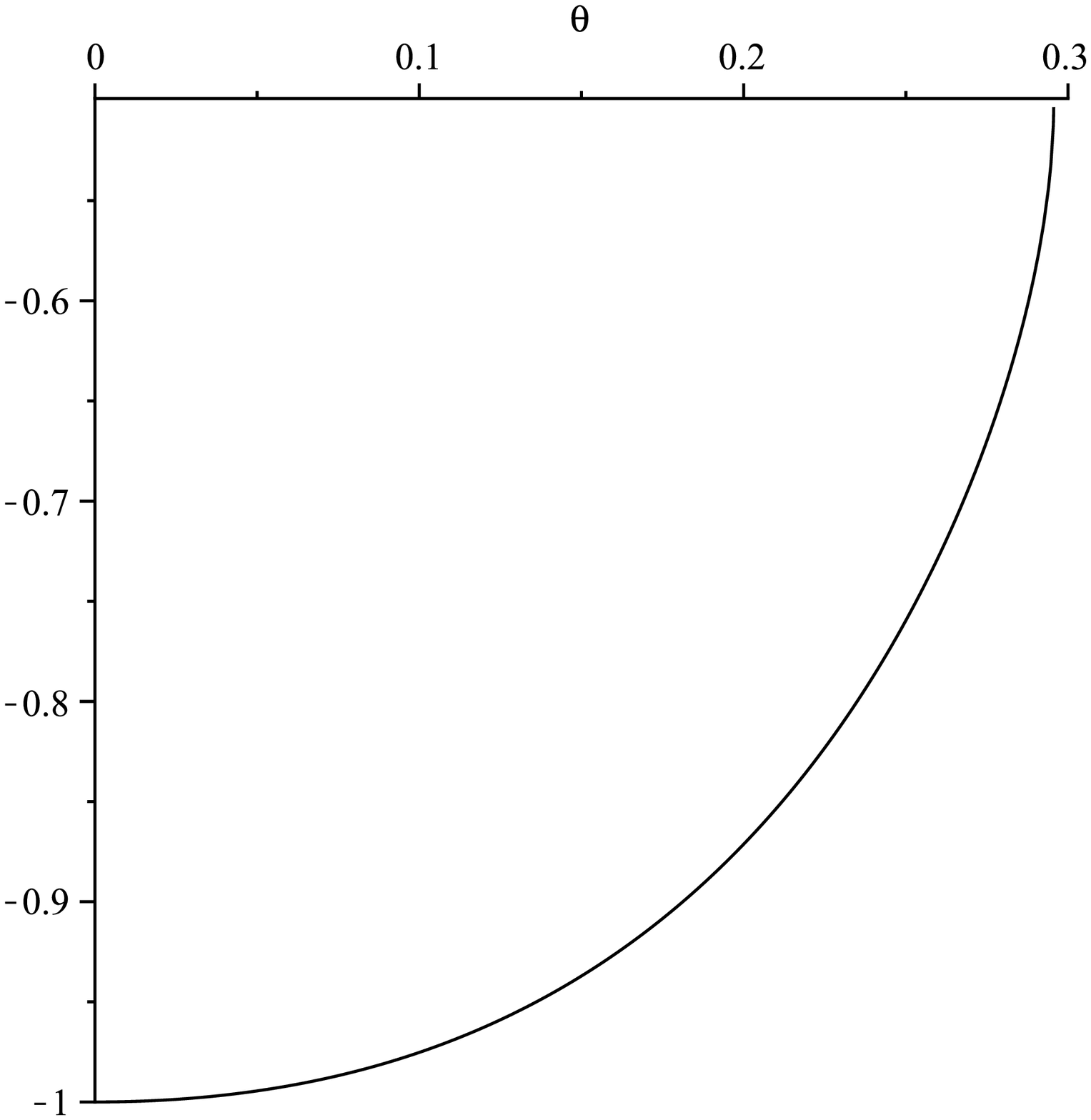}
%\end{center}
%\caption{The graph of functions $\eta_7(\theta)$.}\label{ny7}
 %\end{figure}
 
 \subsection{Conditions for extremality}
 
 In this subsection we are going to find sufficient conditions for extremality (or non-reconstructability in information-theoretic language \cite{MSW},\cite{Mos2},\cite{Mos},\cite{Sly11}) of TISGMs for the 3-state SOS model,
 depending on coupling strength parameterized by $\theta$ and the boundary law.  
 By the above-mentioned non-extremality conditions we know that $\mu_i$, $i=2,3$ are not-extreme, so in this subsection we shall consider the remaining TISGMs: $\mu_i$, $i=1,4,5,6,7$. 
 
 We will prove the following proposition 
 (which gives the results of part II of Theorem \ref{tm} concerning sufficient conditions 
 for extremality.) 
 
 \begin{pro}\label{te} The following assertions hold. 
 \begin{itemize}
 \item[(a)] The exists $\bar{\theta}$ \ \ $(\approx 2.656)$ such that the measure $\mu_1$ is extreme for any $\theta<\bar{\theta}$ (see Fig.\ref{ney1}).
 
 \item[(b)] There exists $\theta^{**}<\theta_c'$ such that the measures $\mu_5$ and $\mu_6$ are extreme for any $\theta>\theta^{**}$.
 
\item[(c)] The measures $\mu_4$ and $\mu_7$ are extreme as soon as they exist (see Fig. \ref{ney6y7}).
 \end{itemize}
 \end{pro}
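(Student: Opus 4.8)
The plan is to apply the extremality criterion of Martinelli, Sinclair and Weitz \cite{MSW}, in the adaptation we have already developed for the Potts model in \cite{KR}, \cite{KRK}. Extremality of a TISGM $\mu_i$ is equivalent to non-reconstructability of the root spin from the spins on a distant level $W_n$ as $n\to\infty$. The MSW method controls this by a recursive estimate along the tree: for each vertex one tracks a suitable discrepancy between the conditional single-site distribution at that vertex under two \emph{arbitrary} boundary conditions on the subtree below it, and shows that one step of the tree-recursion (\ref{rs2.4}) contracts this discrepancy by a factor that can be bounded uniformly over all boundary conditions. Concretely, one fixes a weight vector (a choice of metric on the spin space $\{0,1,2\}$), defines the associated weighted norm of the linearized boundary-to-vertex map built from the transition matrix $\mathbb{P}$ in (\ref{m1}), and bounds its supremum, which I denote $\gamma_i(\theta)$, over the relevant range of boundary fields. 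Since the binary tree has branching number $k=2$, the criterion reads $2\,\gamma_i(\theta)<1$: whenever it holds, the discrepancy decays geometrically and $\mu_i$ is extreme. As a consistency check, $\gamma_i(\theta)$ necessarily dominates the fixed-point eigenvalue $\lambda_{\max,i}(\theta)$ of (\ref{ev}), so the region $2\gamma_i(\theta)<1$ can only lie inside the region $\eta_i(\theta)\le 0$ where the Kesten-Stigum condition of Subsection 4.1 is silent.

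First I would assemble the explicit data. By Proposition \ref{p2} the solution $(x_i,y_i)$ defining $\mu_i$ is an explicit function of $\theta$ (Cardano's formula for $i=1$, and (\ref{x4-7}),(\ref{y4-7}) for $i=4,5,6,7$), so $\mathbb{P}$ in (\ref{m1}) is explicit in $\theta$. Substituting into the MSW estimate converts the condition $2\gamma_i(\theta)<1$ into a single, lengthy inequality in the one real variable $\theta$. The remaining freedom is the weight vector, which should be chosen to make this inequality as sharp as possible; I expect the optimal choice to reflect the asymmetry of the non-symmetric states $\mu_4,\dots,\mu_7$, and this optimization is exactly what must beat a naive Dobrushin-type bound.

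Then I would treat the three cases. For (c), $\mu_4$ and $\mu_7$, I would show $2\gamma_i(\theta)<1$ throughout their range of existence; these laws have $x_i$ far from the symmetric value $1$ (strongly ordered phases), so the contraction factor stays well below $1/2$ and the verification is a uniform inequality with room to spare. For (a), $\mu_1$, I would verify $2\gamma_1(\theta)<1$ for $\theta$ up to a threshold $\bar{\theta}$, defined as the value where equality holds; since $\mu_1=(1,y_1)$ lies on the branch $x=1$ the estimate simplifies, and one checks $\bar{\theta}\approx 2.656$, which indeed falls below the Kesten-Stigum value $\bar{\bar{\theta}}\approx 2.8765$ of Proposition \ref{tne}, leaving the expected gap $(\bar{\theta},\bar{\bar{\theta}})$. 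For (b), $\mu_5$ and $\mu_6$, the same computation produces a threshold $\theta^{**}$ with $2\gamma_i(\theta)<1$ for $\theta>\theta^{**}$, and one verifies $\theta^{**}<\theta_c'$ so that the extremality window is non-empty.

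The main obstacle is twofold. First, one must obtain a contraction bound that is valid uniformly over all boundary conditions — not merely the linearization at the fixed point, which would only reproduce $\lambda_{\max,i}$ — yet still tight enough that the resulting threshold $\bar{\theta}$ genuinely lands below $\bar{\bar{\theta}}$; this is precisely where the freedom in the weight vector and the recursive structure of the MSW argument must be used. Second, because the solutions and the entries of $\mathbb{P}$ are given by Cardano-type radicals in $\theta$, the inequalities $2\gamma_i(\theta)<1$ cannot be solved in closed form, so the determination of the thresholds $\bar{\theta}$ and $\theta^{**}$ and the verification of the inequalities on the stated ranges will be carried out with computer assistance, in the same spirit as the analysis of the functions $\eta_i$ in Subsection 4.1.
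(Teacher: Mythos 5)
Your proposal correctly identifies the framework (the Martinelli--Sinclair--Weitz reconstruction bound, explicit solutions from Proposition \ref{p2}, and a computer-assisted sign check of one-variable functions of $\theta$ at the end), but it mis-states the criterion in a way that is fatal to the stated thresholds. The theorem actually used in the paper (Theorem 9.3 of \cite{MSW}) is a \emph{product} criterion $k\kappa\gamma<1$, where $\kappa$ is the downward coefficient evaluated \emph{at the fixed point} --- it has the explicit form (\ref{ka}), i.e.\ the maximal total-variation distance between rows of the matrix (\ref{m1}), computed exactly for each solution in (\ref{k1}) and (\ref{k2}) using the fixed-point equations --- while only the upward coefficient $\gamma$ is bounded uniformly over arbitrary boundary conditions, via (\ref{gam}): $\gamma\le |1-\theta^2|/(1+\theta^2)$. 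You collapse this into a single worst-case contraction coefficient with criterion $2\gamma_i(\theta)<1$, uniform over boundary conditions. In total variation that single coefficient is exactly the bound of Lemma \ref{lK}, $|1-\theta^2|/(1+\theta^2)$ (the supremum over the simplex is attained on the segment $t+u=1$), so your criterion would read $2|1-\theta^2|/(1+\theta^2)<1$, i.e.\ $\theta<\sqrt{3}\approx 1.732$ for part (a) --- far short of $\bar{\theta}\approx 2.656$, where $2\gamma\approx 1.5$. Worse, for part (c) the measures $\mu_4,\mu_7$ live at $\theta\le\theta_c'\approx 0.2956$, where $\gamma\to 1$ as $\theta\to 0$, so the single-coefficient criterion \emph{never} applies there; their extremality in the paper comes entirely from the smallness of the fixed-point quantity $\kappa(x_i,y_i)$ in (\ref{k2}), precisely the ingredient your formulation discards. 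Your heuristic that ``$x_i$ far from $1$ keeps the contraction factor below $1/2$'' conflates a bound uniform over boundary conditions with one evaluated near the fixed point; restricting the supremum to ``the relevant range of boundary fields'' is not permissible, since reconstruction must be controlled under arbitrary (e.g.\ fully ordered) boundary conditions, and the proposed weight-vector optimization is left without any mechanism that could recover the $\kappa\cdot\gamma$ product.

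Beyond the criterion itself, the technical core of the paper's $\gamma$-bound is absent from your plan: Proposition \ref{pd} together with the monotonicity Lemmas \ref{lem} and \ref{lem1} reduces the maximum over spin pairs to the extreme pair $(0,2)$, yielding the explicit functions $f(t,u,\theta)$ and $g(t,u,\theta)$ on the simplex; Lemma \ref{lK} then shows the interior critical-point equations force $\theta=1$, so the maximum sits on the simplex boundary and equals $|1-\theta^2|/(1+\theta^2)$, uniformly in the solution $(x,y)$. With that bound and the exact $\kappa$ from (\ref{k1})--(\ref{k2}), the criterion $2\kappa\gamma<1$ becomes the sign condition $U_i(\theta)<0$, checked graphically --- only this last numerical step matches your proposal. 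To repair your argument you would need to reinstate the two-coefficient structure of \cite{MSW} and supply the monotonicity and simplex-maximization analysis; as written, the plan cannot reach $\bar{\theta}$, $\theta^{**}$, or any of part (c).
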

 The proof of this proposition follows after the following subsection.

 %\begin{figure}
  %\begin{center}
 %\includegraphics[width=10cm]{}
 %\end{center}
  %\caption{The graph of the function $y_1(\theta)$. The bold curve corresponds to region of the function where corresponding TISGM is extreme. The dashed bold curve corresponds to region of the function where corresponding TISGM is non-extreme. The gap between the two types of curves are given by the thin curve.}\label{ney1}
%\end{figure}
 % \begin{figure}
  %\begin{center}
  %\includegraphics[width=12cm]{}
 %\end{center}
 %\caption{The graphs of the functions $y_i(\theta)$, $i=2,3,4,5,6,7$.  The bold curves correspond to regions of the functions where the corresponding TISGM is extreme. The dashed  curves correspond to regions of the functions where corresponding TISGMs are non-extreme. The gap between the two types of curves are given by the thin curves in the graphs of $y_5$ and $y_6$.}\label{ney6y7}
%\end{figure} 

 \subsection{Reconstruction insolvability on trees: extremality of TISGM}
 
 To prove Proposition \ref{te} we will use a result of \cite{MSW} to establish a bound for reconstruction impossibility corresponding to the matrix (channel) of a solution $v_i$, $i=1,4,5,6,7$.
 
 Let us first give some necessary definitions from \cite{MSW}. Considering finite complete subtrees $\mathcal T$ that are initial points of Cayley tree $\Gamma^k$, i.e. share the same root; if $\mathcal T$ has depth $d$ (i.e. the vertices of $\mathcal T$ are within distance $\leq d$ from the root)
 then it has $(k^{d+1}-1)/(k-1)$ vertices, and its boundary $\partial \mathcal T$ consists the neighbors (in $\Gamma^k\setminus \mathcal T$) of its vertices, i.e., $|\partial \mathcal T|=k^{d+1}.$  We identify subgraphs of $\mathcal T$ with their vertex sets and write $E(A)$ for the edges within a subset $A$ and $\partial A$ for the boundary of $A$, i.e., the neighbors of $A$ in $(\mathcal T\cup \partial\mathcal T)\setminus A)$.
 
 In \cite{MSW} the key ingredients are two quantities,
 $\kappa$ and $\gamma$, which bound the rates of
 percolation of disagreement down and up the tree, respectively. Both are properties of
 the collection of Gibbs measures $\{\mu^\tau_{{\mathcal T}}\}$, where the boundary condition $\tau$
 is fixed and $\mathcal T$ ranges over all initial finite complete subtrees of $\Gamma^k$.
  For a given subtree $\mathcal T$ of $\Gamma^k$ and a vertex $x\in\mathcal T$, we write $\mathcal T_x$ for
 the (maximal) subtree of $\mathcal T$ rooted at $x$. When $x$ is not the root of $\mathcal T$, let $\mu_{\mathcal T_x}^s$
 denote the (finite-volume) Gibbs measure in which the
 parent of $x$ has its spin fixed to $s$ and the configuration on the bottom boundary  of ${\mathcal T}_x$
 (i.e., on $\partial {\mathcal T}_x\setminus \{\mbox{parent\ \ of}\ \ x\}$) is
 specified by $\tau$.
 
  For two measures $\mu_1$ and $\mu_2$ on $\Omega$, $\|\mu_1-\mu_2\|_x$ denotes the variation distance between the projections of $\mu_1$ and $\mu_2$ onto the spin at $x$, i.e.,
 $$\|\mu_1-\mu_2\|_x={1\over 2}\sum_{i=0}^2|\mu_1(\sigma(x)=i)-\mu_2(\sigma(x)=i)|.$$
 Let $\eta^{x,s}$ be the
 configuration $\eta$ with the spin at $x$ set to $s$.
 
 Following \cite{MSW} define
 $$\kappa\equiv \kappa(\mu)=\sup_{x\in\Gamma^k}\max_{x,s,s'}\|\mu^s_{{\mathcal T}_x}-\mu^{s'}_{{\mathcal T}_x}\|_x;$$
 $$\gamma\equiv\gamma(\mu)=\sup_{A\subset \Gamma^k}\max\|\mu^{\eta^{y,s}}_A-\mu^{\eta^{y,s'}}_A\|_x,$$
 where the maximum is taken over all boundary conditions $\eta$, all sites $y\in \partial A$, all neighbors $x\in A$ of $y$, and all spins $s, s'\in \{0,1,2\}$.
 
 As the main ingredient we apply \cite[Theorem 9.3]{MSW}, which is 
 
 \begin{thm} For an arbitrary (ergodic\footnote{Ergodic means irreducible and aperiodic Markov chain. Therefore has a unique stationary distribution $\pi=(\pi_1,\dots,\pi_q)$ with $\pi_i>0$ for all $i$.} and permissive\footnote{Permissive means that for arbitrary finite $A$ and boundary condition outside $A$ being $\eta$ the conditioned Gibbs measure on $A$, corresponding to the channel is positive for at least one configuration.}) channel ${\mathbb P}=(P_{ij})_{i,j=1}^q$
 on a tree, the reconstruction of the corresponding tree-indexed Markov chain 
 is impossible if $k\kappa\gamma<1$. 
 \end{thm}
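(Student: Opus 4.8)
Reconstruction is impossible precisely when the spin at the root becomes asymptotically independent of the configuration on a deep boundary. Writing $\mathcal{T}_d$ for the depth-$d$ initial subtree with root $r$, my plan is to prove
\[\Delta_d := \sup_{\eta,\eta'}\|\mu^{\eta}_{\mathcal{T}_d}-\mu^{\eta'}_{\mathcal{T}_d}\|_{r}\longrightarrow 0 \quad (d\to\infty),\]
the supremum running over all boundary conditions $\eta,\eta'$ on $\partial\mathcal{T}_d$. By a hybrid (telescoping) argument one may change $\eta$ into $\eta'$ one boundary vertex at a time, so that
\[\Delta_d \le \sum_{y\in\partial\mathcal{T}_d} I_d(y), \qquad I_d(y):=\max_{s,s'}\|\mu^{\eta^{y,s}}_{\mathcal{T}_d}-\mu^{\eta^{y,s'}}_{\mathcal{T}_d}\|_{r},\]
and it then remains to bound the influence $I_d(y)$ of a single boundary site and to sum over the $k^{d+1}$ sites of $\partial\mathcal{T}_d$.

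\textbf{The coupling and the roles of $\kappa$ and $\gamma$.} For fixed $y$ and spins $s,s'$ I would build a coupling of $\mu^{\eta^{y,s}}$ and $\mu^{\eta^{y,s'}}$ that exposes the tree from the boundary inward, realising the set of disagreeing vertices as a connected cluster containing $y$. Ergodicity and permissiveness guarantee that all the conditional single-site laws appearing in the exposure are well defined and strictly positive, so the coupling exists. The cluster grows by two elementary moves, each controlled by exactly one of the input quantities: a disagreement at a vertex forces a disagreement at its parent with conditional weight at most $\gamma$, which is precisely the boundary-to-neighbour bound in the definition of $\gamma$ applied to the set $A$ formed by the already-exposed region, while a disagreement transmitted to an internal vertex spreads into its pending subtree with weight at most $\kappa$, the parent-to-subtree-root bound defining $\kappa$. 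Thus every edge traversed on the way from $y$ up toward $r$ carries a combined factor $\kappa\gamma$.

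\textbf{Branching estimate and conclusion.} The influence of a single deep boundary site then decays geometrically: the disagreement must travel along the unique path joining $y$ to $r$, and each edge of this path costs a factor $\kappa\gamma$, giving $I_d(y)\le C(\kappa\gamma)^{d}$ for a constant $C$ depending only on $k$. Since $|\partial\mathcal{T}_d|=k^{d+1}$, summing yields $\Delta_d\le C\,k^{d+1}(\kappa\gamma)^{d}=Ck\,(k\kappa\gamma)^{d}\to 0$ whenever $k\kappa\gamma<1$, which is the asserted impossibility of reconstruction. The cluster is in this way stochastically dominated by a subcritical branching process on $\Gamma^k$, the $k$-fold vertex degree supplying the branching and $\kappa\gamma$ the per-edge survival weight.

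\textbf{Main obstacle.} The hard part is the one-step contraction lemma underlying the per-edge factor $\kappa\gamma$: one must show, \emph{uniformly} over all initial subtrees and all boundary conditions, that the composite of the upward propagation step (bounded by $\gamma$) and the downward spreading step (bounded by $\kappa$) really contracts the root-marginal sensitivity by $\kappa\gamma$ per edge. This is exactly where the sup-over-all-subsets form of $\gamma$ and the sup-over-all-subtrees form of $\kappa$ are indispensable, and where a careful conditioning argument is required to convert the ``parent influences child'' estimates into control of the ``child influences parent'' propagation that actually transports information up to the root; once this uniform recursion is in place, the geometric bound and the conclusion follow by iteration on the depth.
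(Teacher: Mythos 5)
You should first note that the paper itself offers no proof of this statement: it is quoted verbatim as Theorem 9.3 of Martinelli--Sinclair--Weitz \cite{MSW}, so your sketch has to be measured against their argument.

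There is a genuine gap, and it sits exactly where you placed your ``main obstacle''. Your opening reduction replaces non-reconstruction by the worst-case statement $\Delta_d=\sup_{\eta,\eta'}\|\mu^{\eta}_{\mathcal T_d}-\mu^{\eta'}_{\mathcal T_d}\|_r\to 0$. Non-reconstruction, however, concerns boundary configurations \emph{typical} for the chain itself (it is equivalent to extremality, i.e.\ tail triviality), whereas $\Delta_d\to 0$ is a worst-case, uniqueness-type (weak spatial mixing) condition that is strictly stronger --- and it is \emph{false} under the hypothesis $k\kappa\gamma<1$. Indeed, the theorem is applied in this very paper to $\mu_4$ and $\mu_7$ for $\theta<\theta_c$, where seven TISGMs coexist, so suitable deterministic boundary conditions drive the root marginal to different limits and $\Delta_d\not\to 0$; likewise, for the Ising model one has $\kappa=\gamma=\tanh\beta$, so $k\kappa\gamma<1$ recovers the Kesten--Stigum threshold $k\tanh^2\beta<1$, which lies strictly inside the phase-coexistence region. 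Consequently your key lemma, a per-edge factor $\kappa\gamma$ for the influence $I_d(y)$ of a \emph{fixed} single boundary flip, cannot be proved: the honest upward path-coupling (reveal the spins along the path from $y$ to $r$ in both copies; once they agree at some path vertex, the Markov property forces the root marginals to coincide) yields exactly one factor $\gamma$ per edge, i.e.\ $I_d(y)\le\gamma^d$, and no more. The quantity $\kappa$ bounds the influence of a parent spin on its subtree, i.e.\ \emph{downward} percolation of disagreement, and supplies no additional damping of a quenched upward disturbance; the conversion you defer to the ``careful conditioning argument'' does not exist.

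Where $\kappa$ actually enters in \cite{MSW} is in a two-pass, annealed argument rather than a single per-edge contraction. One first couples the two boundary distributions given root spins $s$ and $s'$ by propagating disagreement \emph{down} the tree: by the definition of $\kappa$, the disagreement cluster is stochastically dominated by branching percolation with mean offspring $k\kappa$, so the expected number of boundary sites at depth $d$ where the coupled boundaries differ is of order $(k\kappa)^d$. One then telescopes over this random (typically sparse) disagreement set, bounding the effect of each single flip on the root posterior by the upward estimate $\gamma^d$. The product gives a decay rate of order $(k\kappa\gamma)^d$ for the annealed dependence of the root on the depth-$d$ boundary, which is precisely non-reconstruction. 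So your branching-process intuition is right in spirit, but the factor $k\kappa$ must act on the typical boundary coming down and $\gamma$ alone on the way up; wiring $\kappa\gamma$ into one worst-case edge step proves a different, and in the relevant regimes false, statement.
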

 It is easy to see that the channel ${\mathbb P}$ corresponding to a TISGM of the SOS model is ergodic and permissive. Thus the criterion of {\it extremality} of a TISGM is $k\kappa\gamma<1$.
 
 Note that $\kappa$ has the particularly simple form (see \cite{MSW})
 \begin{equation}\label{ka}
 \kappa={1\over 2}\max_{i,j}\sum_l|P_{il}-P_{jl}|
 \end{equation}
 and $\gamma$ is a constant which does not have a clean general formula, but can be estimated in specific models (as Ising, Hard-Core etc.). For example, if ${\mathbb P}$ is the symmetric channel of the Potts model then $\gamma\leq {\theta-1\over\theta+1}$  \cite[Theorem 8.1]{MSW}.
 
 \begin{rk}
 Since each TISGM $\mu$ corresponds to a solution $(x,y)$ of the system of equations (\ref{rs3.2a}), (\ref{rs3.2b}) we can write $\gamma(\mu)=\gamma(x,y)$ and $\kappa(\mu)=\kappa(x,y)$. 
 \end{rk}
 
 \subsubsection{The estimation of $\gamma$ for the SOS model.}
 
 After generalities of the approach of Martinelli, Sinclair, Weitz we are now 
 ready to start to technical work to estimate the constant $\gamma(x_i,y_i)$ 
 depending on the boundary law labeled by $i$ from above.

 Consider the case $v_i$, $i=1,4,5,6,7$.
 
  \begin{pro}\label{pd} Recall the matrix ${\mathbb P}$, given by (\ref{m}), and denote by
 $\mu=\mu(\theta)$ the corresponding Gibbs measure. Then, for any subset $A\subset {\mathcal T}$, (where $\mathcal T$ is initial complete subtree of $\Gamma^k$)
  any boundary configuration $\eta$, any pair of spins
 $(s_1, s_2)$, any site $y\in \partial A$, and any neighbor $x\in A$ of $y$, we have
  $$\|\mu^{\eta^{y,s_1}}_A-\mu^{\eta^{y,s_2}}_A\|_x=
  \max\{|p^{0}(0)-p^{2}(0)|, |p^{2}(2)-p^{0}(2)|\},$$
 where  $p^{t}(s)=\mu^{\eta^{y,t}}_A(\sigma(x)=s)$.
 \end{pro}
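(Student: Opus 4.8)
The plan is to use the Markov property on the tree to reduce the influence of the boundary spin at $y$ to the single edge $\langle x,y\rangle$, and then to settle everything by an elementary analysis of variation distances between probability vectors on the three-point space $\{0,1,2\}$.

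First I would establish the basic factorization. Since $y\in\partial A$ and $x\in A$ is a neighbor of $y$, deleting $y$ disconnects $x$ (together with its component of $A$) from all the other neighbors of $y$; by the tree-Markov property the conditional law of $\sigma(x)$ under $\mu^{\eta^{y,t}}_A$ thus depends on the fixed value $t$ only through the interaction across $\langle x,y\rangle$. Using the edge weight $\theta^{|s-t|}$ coming from the Hamiltonian (\ref{rs1.1}) with $\theta=\exp(J\beta)$, this gives
$$
p^{t}(s)=\mu^{\eta^{y,t}}_A(\sigma(x)=s)=\frac{\theta^{|s-t|}\,q(s)}{\sum_{s'=0}^{2}\theta^{|s'-t|}\,q(s')},\qquad s\in\{0,1,2\},
$$
where $q(s)\ge0$ gathers all contributions of the component of $x$ (its subtree weights and the remaining boundary condition $\eta$) and, crucially, does not depend on $t$.

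Next I would record the monotonicity built into this formula. For $t<t'$ the exponent $|s-t'|-|s-t|$ is non-increasing in $s$, so the ratio $p^{t'}(s)/p^{t}(s)$ is monotone in $s$ uniformly in $q$ (increasing for $\theta<1$, decreasing for $\theta>1$); hence $p^{0},p^{1},p^{2}$ are stochastically ordered. In particular, for every pair $s_1<s_2$ the differences $d_0=p^{s_1}(0)-p^{s_2}(0)$ and $d_2=p^{s_1}(2)-p^{s_2}(2)$ have opposite signs. The variation distance is then computed by the elementary identity: if $d_0+d_1+d_2=0$ and $d_0d_2\le0$, then $\tfrac12(|d_0|+|d_1|+|d_2|)=\max\{|d_0|,|d_2|\}$ (split on the sign of $d_1$). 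Applied to $d_s=p^{s_1}(s)-p^{s_2}(s)$, this shows that $\|\mu^{\eta^{y,s_1}}_A-\mu^{\eta^{y,s_2}}_A\|_x$ always equals the larger of the two boundary differences at the extreme spins $0$ and $2$. Since $t\mapsto p^{t}(0)$ and $t\mapsto p^{t}(2)$ are themselves monotone, the maximum over all pairs is attained at $(s_1,s_2)=(0,2)$, which yields precisely $\max\{|p^{0}(0)-p^{2}(0)|,\,|p^{2}(2)-p^{0}(2)|\}$; the antiferromagnetic range $\theta>1$ (relevant for $\mu_1$) is handled identically after the reflection $s\mapsto2-s$.

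I expect the main obstacle to be the first step, namely a clean justification that $q(s)$ is genuinely independent of $t$, i.e. that on the tree the only channel through which the spin at $y$ acts on the marginal at $x$ is the single edge $\langle x,y\rangle$, even when $y$ has further neighbors inside $A$ (which lie in a different component of $A$ once $y$ is removed). Once this separation is in place, the remainder is a finite three-state computation with no further analytic difficulty.
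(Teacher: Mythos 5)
Your proof is correct and follows essentially the same route as the paper: your factorization $p^{t}(s)\propto\theta^{|s-t|}q(s)$ with $q$ independent of $t$ is exactly the content of the paper's formulas (\ref{ppp}), your stochastic-ordering step is its Lemma \ref{lem} (proved there by explicit computation of the differences $p^{i}(k)-p^{j}(k)$ rather than by your monotone-likelihood-ratio argument), and the reduction to the two extreme coordinates together with the maximization over the pair $(s_1,s_2)$ is its Lemma \ref{lem1}. If anything, your write-up is slightly more complete, since you state explicitly the three-point identity $\tfrac12\left(|d_0|+|d_1|+|d_2|\right)=\max\{|d_0|,|d_2|\}$ when $d_0+d_1+d_2=0$ and $d_0d_2\le 0$, a step the paper leaves implicit in passing from coordinatewise maxima to the variation distance $\|\cdot\|_x$.
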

 \begin{proof} Denote $p_s=\mu^{\eta^{y,\,free}}_A(\sigma(x)=s)$, $s=0,1,2$.   By definition of the matrix ${\mathbb P}$ we have
$$
 p^0(0)={x^2p_0\over x^2p_0+\theta y^2p_1+\theta^2 p_2}, \ \
  p^0(1)={\theta y^2p_1\over x^2p_0+\theta y^2p_1+\theta^2 p_2},\ \
   p^0(2)={\theta^2p_2\over x^2p_0+\theta y^2p_1+\theta^2 p_2};$$
    \begin{equation}\label{ppp}
  p^1(0)={\theta x^2p_0\over \theta x^2p_0+ y^2p_1+\theta p_2}, \ \
     p^1(1)={y^2p_1\over \theta x^2p_0+ y^2p_1+\theta p_2},\ \
      p^1(2)={\theta p_2\over \theta x^2p_0+ y^2p_1+\theta p_2};
    \end{equation}   
    $$
       p^2(0)={\theta^2x^2p_0\over \theta^2 x^2p_0+\theta y^2p_1+ p_2}, \ \
        p^2(1)={\theta y^2p_1\over \theta^2 x^2p_0+\theta y^2p_1+ p_2},\ \
         p^2(2)={p_2\over \theta^2 x^2p_0+\theta y^2p_1+p_2}.
        $$

 The  proposition follows from the following Lemma \ref{lem} and Lemma \ref{lem1}.\end{proof}
 
 \begin{lemma}\label{lem} \begin{itemize}
 \item[(i)] If $\theta<1$ then 
 \begin{itemize}
 \item[a)]  $p^{0}(0)\geq p^{1}(0)\geq p^2(0)$;
 
 \item[b)]  $p^{1}(1)\geq p^{0}(1)\geq p^2(1)$ 
 \ \ \mbox{if} \ \ $p_2\geq x^2 p_0$ and 
 $p^{1}(1)\geq p^{2}(1)\geq p^0(1)$ 
  \ \ \mbox{if} \ \ $p_2\leq x^2 p_0$;
 
 \item[c)]  $p^{2}(2)\geq p^{1}(2)\geq p^0(2)$.
 \end{itemize}
 
  \item[(ii)] If $\theta>1$ then 
  \begin{itemize}
  \item[a)]  $p^{0}(0)\leq p^{1}(0)\leq p^2(0)$;
  
  \item[b)]  $p^{1}(1)\leq p^{0}(1)\leq p^2(1)$ 
  \ \ \mbox{if} \ \ $p_2\geq x^2 p_0$ and 
  $p^{1}(1)\leq p^{2}(1)\leq p^0(1)$ 
   \ \ \mbox{if} \ \ $p_2\leq x^2 p_0$;
  
  \item[c)]  $p^{2}(2)\leq p^{1}(2)\leq p^0(2)$.
  \end{itemize}
 \end{itemize}
 \end{lemma}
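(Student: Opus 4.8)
The plan is to work directly with the nine explicit ratios for $p^t(s)$ recorded in (\ref{ppp}) and compare them pairwise. To lighten the bookkeeping I would abbreviate $a=x^2p_0\geq 0$, $b=y^2p_1\geq 0$, $c=p_2\geq 0$, so that $p^t(s)=N^t_s/D^t$, where $D^t$ is the (strictly positive) common denominator of the three entries in block $t$ of (\ref{ppp}); explicitly $D^0=a+\theta b+\theta^2c$, $D^1=\theta a+b+\theta c$, $D^2=\theta^2a+\theta b+c$, and the numerators are $N^0_s=(a,\theta b,\theta^2c)$, $N^1_s=(\theta a,b,\theta c)$, $N^2_s=(\theta^2a,\theta b,c)$ for $s=0,1,2$. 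Since $D^t>0$, the sign of $p^t(s)-p^{t'}(s)$ is exactly the sign of the cross-difference $N^t_sD^{t'}-N^{t'}_sD^t$, a polynomial in $a,b,c,\theta$. Working with cross-differences rather than reciprocals avoids dividing by quantities that may vanish, so the degenerate cases where one of $p_0,p_1,p_2$ is zero need no separate treatment.

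Before computing I would record two structural symmetries that cut the work in half. The substitution $a\leftrightarrow c$ (equivalently $x^2p_0\leftrightarrow p_2$) sends the formulas (\ref{ppp}) for $p^t(s)$ to those for $p^{2-t}(2-s)$, since it fixes each $D^t$ up to the relabelling $D^0\leftrightarrow D^2$, $D^1\mapsto D^1$; hence the chain asserted in (a) is carried to the chain asserted in (c), and it suffices to prove (a). Secondly, I expect every cross-difference below to factor as $(1-\theta^2)$ times a factor whose sign is independent of whether $\theta\lessgtr 1$; passing from $\theta<1$ to $\theta>1$ then merely flips $(1-\theta^2)$ and reverses each inequality, which is exactly the relation between parts (i) and (ii). It therefore suffices to establish (i)(a) and (i)(b) for $\theta<1$.

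For (i)(a) I would expand the two adjacent cross-differences, obtaining
\[ N^0_0D^1-N^1_0D^0=a(1-\theta^2)(b+\theta c),\qquad N^1_0D^2-N^2_0D^1=\theta ac(1-\theta^2), \]
both $\geq 0$ for $\theta<1$, which gives $p^0(0)\geq p^1(0)\geq p^2(0)$. Part (c) then follows from the $a\leftrightarrow c$ symmetry (its cross-differences are the images $c(1-\theta^2)(\theta a+b)$ and $\theta ac(1-\theta^2)$). The same mechanism settles the easy half of (b): comparing $p^1(1)$ with each of $p^0(1)$ and $p^2(1)$ yields
\[ N^1_1D^0-N^0_1D^1=ab(1-\theta^2),\qquad N^1_1D^2-N^2_1D^1=bc(1-\theta^2), \]
both $\geq 0$ for $\theta<1$, so $p^1(1)$ dominates the other two.

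The one genuinely delicate point, and the only place I expect real content, is the comparison of $p^0(1)$ and $p^2(1)$ with each other. Here the cross-difference works out to
\[ N^0_1D^2-N^2_1D^0=\theta b\,(1-\theta^2)\,(c-a)=\theta b\,(1-\theta^2)\,(p_2-x^2p_0), \]
whose sign is no longer fixed by $\theta$ alone: the companion factor $c-a$ is indefinite. This is precisely the origin of the case split in the statement. For $\theta<1$ we read off $p^0(1)\geq p^2(1)$ when $p_2\geq x^2p_0$ and $p^0(1)\leq p^2(1)$ when $p_2\leq x^2p_0$, yielding the two advertised orderings in (i)(b); for $\theta>1$ the factor $(1-\theta^2)<0$ interchanges which condition produces which ordering, matching (ii)(b). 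Thus the main obstacle is not any single inequality but the recognition that the middle spin $s=1$ is controlled by the sign-indefinite quantity $p_2-x^2p_0$, so $p^0(1)$ and $p^2(1)$ cannot be separated without conditioning on it; once each cross-difference is factored through $(1-\theta^2)$, every remaining step is forced.
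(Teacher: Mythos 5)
Your proposal is correct and takes essentially the same route as the paper's own proof: the paper likewise compares the explicit ratios in (\ref{ppp}) pairwise and exhibits precisely your two key factorizations, namely $p^{0}(0)-p^{1}(0)\propto x^2p_0(1-\theta^2)(y^2p_1+\theta p_2)$ and $p^{0}(1)-p^{2}(1)\propto \theta y^2p_1(1-\theta^2)(p_2-x^2p_0)$, declaring all remaining comparisons ``very similar.'' Your write-up is in fact slightly more complete than the paper's, since the $a\leftrightarrow c$ symmetry and the $(1-\theta^2)$ sign flip make explicit exactly how those similar cases reduce to the computed ones.
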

 \begin{proof}  We shall prove two inequalities (all others are very similar): 
 by the formula (\ref{ppp}) we get
  $$p^{0}(0)-p^{1}(0)={x^2p_0(1-\theta^2)(y^2p_1+\theta p_2)\over \left(x^2p_0+\theta y^2p_1+\theta^2 p_2\right)\left(\theta x^2p_0+ y^2p_1+\theta p_2\right)},$$
  which is positive iff $\theta<1$.
   $$p^{0}(1)-p^{2}(1)={\theta y^2p_1(1-\theta^2)(p_2-x^2 p_0)\over \left(x^2p_0+\theta y^2p_1+\theta^2 p_2\right)\left(\theta^2 x^2p_0+\theta y^2p_1+p_2\right)},$$
   which is non-negative if $\theta<1$ and $p_2\geq x^2 p_0$ or $\theta>1$ and $p_2\leq x^2 p_0$.
   \end{proof}
 From Lemma \ref{lem} we obtain the following lemma. 
 \begin{lemma}\label{lem1} We have 
 $$\max_{i,j,k}\left\{|p^i(k)-p^j(k)|\right\}=\max\{|p^0(0)-p^2(0)|, |p^2(2)-p^0(2)|\}.$$
 \end{lemma}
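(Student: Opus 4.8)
The plan is to reduce the maximum over all triples $(i,j,k)$ to three separate one-dimensional maximizations, one for each outcome value $k\in\{0,1,2\}$, and then to show that the contribution from the outcome $k=1$ is always dominated by the other two. First I would fix the outcome $k$ and ask how the three conditioned probabilities $p^0(k),p^1(k),p^2(k)$ are ordered as functions of the conditioning spin. For $k=0$ and $\theta<1$, part (a) of Lemma \ref{lem} gives $p^0(0)\ge p^1(0)\ge p^2(0)$, so among the three pairwise differences the largest in absolute value is the one spanning the two extremes, namely $|p^0(0)-p^2(0)|$. Symmetrically, for $k=2$ part (c) gives $p^2(2)\ge p^1(2)\ge p^0(2)$, so the $k=2$ maximum is $|p^2(2)-p^0(2)|$. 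Thus the two quantities on the right-hand side are exactly the maxima coming from the outcomes $k=0$ and $k=2$, and both are genuinely attained.

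The crux is the outcome $k=1$, where part (b) only tells us that $p^1(1)$ is the largest of the three (for $\theta<1$), so the relevant quantity is $p^1(1)-\min\{p^0(1),p^2(1)\}=\max\{p^1(1)-p^0(1),\,p^1(1)-p^2(1)\}$. Here I would exploit that every row of $\mathbb P$ is a probability vector, i.e. $p^t(0)+p^t(1)+p^t(2)=1$ for each $t$. Subtracting the normalization identities for $t=1$ and $t=0$ yields
$$p^1(1)-p^0(1)=\bigl(p^0(0)-p^1(0)\bigr)-\bigl(p^1(2)-p^0(2)\bigr),$$
and since both bracketed terms are nonnegative for $\theta<1$ (by (a) and (c)) this gives $p^1(1)-p^0(1)\le p^0(0)-p^1(0)\le p^0(0)-p^2(0)$, the last step using $p^1(0)\ge p^2(0)$ from (a). The same manipulation with $t=1$ and $t=2$ gives $p^1(1)-p^2(1)\le p^2(2)-p^1(2)\le p^2(2)-p^0(2)$, using $p^1(2)\ge p^0(2)$ from (c). Hence the entire $k=1$ contribution is bounded above by $\max\{|p^0(0)-p^2(0)|,\,|p^2(2)-p^0(2)|\}$, which combined with the first two cases proves the claimed equality.

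Finally I would note that the case $\theta>1$ is entirely parallel: all inequalities in Lemma \ref{lem} reverse, so $p^1(1)$ becomes the smallest of the three and the relevant $k=1$ quantity is $\max\{p^0(1),p^2(1)\}-p^1(1)$; the same two normalization subtractions, now with reversed signs, bound it by the same right-hand side. I expect the only delicate point to be bookkeeping the signs correctly, so that each normalization difference is bounded by an \emph{extreme}-pair difference rather than by the non-extreme one. This is precisely where the two monotonicity directions from parts (a) and (c), together with the unimodality from part (b), of Lemma \ref{lem} enter in an essential way.
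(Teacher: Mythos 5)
Your proof is correct and follows essentially the same route as the paper's: both rest on subtracting the row-normalization identities (the paper's key identity $p^0(0)-p^1(0)=p^1(1)-p^0(1)+p^1(2)-p^0(2)$ is exactly your rearranged one) and then using the sign information from Lemma \ref{lem} to dominate the $k=1$ differences by the extreme-pair differences at $k=0$ and $k=2$, with the $\theta>1$ case handled by symmetry. The only nitpick is an attribution slip: the step $p^1(1)-p^2(1)\le p^2(2)-p^1(2)$ uses $p^1(0)\ge p^2(0)$ from part (a), while $p^1(2)\ge p^0(2)$ from part (c) justifies the subsequent step $p^2(2)-p^1(2)\le p^2(2)-p^0(2)$.
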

 \begin{proof} Consider the case $\theta<1$ the other case is similar. We have 
 $$ p^0(0)-p^1(0)=p^1(1)-p^0(1) + p^1(2)-p^0(2). $$
 By Lemma \ref{lem} we have $p^1(1)-p^0(1)\geq 0$ and $p^1(2)-p^0(2)\geq 0$. Thus  
 $$ p^0(0)-p^1(0)\geq p^1(1)-p^0(1),$$
  $$ p^0(0)-p^1(0)\geq p^1(2)-p^0(2).$$ Again using Lemma \ref{lem}
  we get 
  $$ p^0(0)-p^2(0)\geq p^0(0)-p^1(0)\geq p^1(1)-p^0(1),$$
   $$ p^2(2)-p^0(2)\geq p^2(2)-p^1(2)\geq p^1(1)-p^2(1),$$
   and
   $$ p^0(0)-p^2(0)\geq p^1(0)-p^2(0).$$
   Now,  from the equality 
    $$ p^0(1)-p^2(1)= p^2(2)-p^0(2)-(p^0(0)-p^1(0))$$
    it follows that 
    $$ |p^0(1)-p^2(1)|\leq \max\{|p^2(2)-p^0(2)|,|p^0(0)-p^1(0)|\}.$$
 \end{proof}
 Let $p=(p_0,p_1,p_2)$ be a probability distribution on $\{0,1,2\}$. For $t=p_0$ and $u=p_2$, $0\leq t+u\leq 1$ we define the following functions
 $$f(t,u,\theta)=p^0(0)-p^2(0)={x^2t\over (x^2-\theta y^2)t+\theta(\theta-y^2)u+\theta y^2}-{x^2\theta^2t\over \theta(\theta x^2-y^2)t+(1-\theta y^2)u+\theta y^2};$$
 $$g(t,u,\theta)=p^2(2)-p^0(2)={u\over \theta(\theta x^2-y^2)t+(1-\theta y^2)u+\theta y^2}-{\theta^2 u\over (x^2-\theta y^2)t+\theta(\theta- y^2)u+\theta y^2}.$$
 
 \begin{lemma}\label{lK} We have
 $$|f(t,u,\theta)|\leq {|1-\theta^2|\over 1+\theta^2} \ \ \mbox{and} \ \ |g(t,u,\theta)|\leq {|1-\theta^2|\over 1+\theta^2}.$$
 \end{lemma}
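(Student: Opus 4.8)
The plan is to first strip away the boundary-law constants $x,y$ by the substitution $q_0=x^2t$, $q_1=y^2(1-t-u)$, $q_2=u$, which are all nonnegative for every admissible $(t,u)$ and every $x,y>0$. In these variables the formulas of Proposition \ref{pd} collapse to
$$
f=\frac{q_0}{q_0+\theta q_1+\theta^2 q_2}-\frac{\theta^2 q_0}{\theta^2 q_0+\theta q_1+q_2},\qquad
g=\frac{q_2}{\theta^2 q_0+\theta q_1+q_2}-\frac{\theta^2 q_2}{q_0+\theta q_1+\theta^2 q_2}.
$$
The gain from this rewriting is twofold. First, the claimed bound $\frac{|1-\theta^2|}{1+\theta^2}$ does not involve $x$ or $y$ at all, so it suffices to establish $|f|\le\frac{|1-\theta^2|}{1+\theta^2}$ for an \emph{arbitrary} nonnegative vector $(q_0,q_1,q_2)$. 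Second, $g$ is literally $f$ with the roles of $q_0$ and $q_2$ interchanged; hence the bound for $g$ is an immediate consequence of the bound for $f$, and I only need to treat $f$.

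Next I would isolate the sign and the critical factor of $f$. Writing $A=q_0+\theta q_1+\theta^2 q_2$ and $B=\theta^2 q_0+\theta q_1+q_2$, one has $f=q_0\,\frac{B-\theta^2 A}{AB}$, and a short computation produces the telescoping
$$
B-\theta^2 A=(1-\theta^2)\bigl[\theta q_1+(1+\theta^2)q_2\bigr].
$$
Thus $f$ carries the sign of $1-\theta^2$, and since $A,B>0$ and all $q_i\ge 0$, the desired inequality $|f|\le\frac{|1-\theta^2|}{1+\theta^2}$ is equivalent to the purely polynomial statement
$$
(1+\theta^2)\,q_0\bigl[\theta q_1+(1+\theta^2)q_2\bigr]\le AB.
$$

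Finally I would verify this last inequality through a sum-of-squares identity. Expanding $AB$ and subtracting the left-hand side, the mixed $q_0q_1$ contributions cancel and the $q_0q_2$ coefficient simplifies via $(1+\theta^4)-(1+\theta^2)^2=-2\theta^2$, leaving
$$
AB-(1+\theta^2)q_0\bigl[\theta q_1+(1+\theta^2)q_2\bigr]
=\theta^2(q_0-q_2)^2+\theta^2 q_1^2+\theta(1+\theta^2)q_1q_2,
$$
which is manifestly nonnegative for $\theta>0$ and $q_i\ge 0$. This yields $|f|\le\frac{|1-\theta^2|}{1+\theta^2}$, and the same bound for $|g|$ follows from the $q_0\leftrightarrow q_2$ symmetry recorded above. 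I do not anticipate a genuine obstacle here: the entire difficulty is the bookkeeping in expanding $AB$, and the essential structural point is that the SOS weight matrix $(\theta^{|i-j|})$ forces the cross terms to regroup into the three nonnegative summands displayed above, with equality in the bound occurring exactly when $q_1=0$ and $q_0=q_2$.
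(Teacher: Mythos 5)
Your proof is correct, and it takes a genuinely different route from the paper's. The paper fixes $\theta<1$ (invoking invariance under $\theta\mapsto 1/\theta$ for the other case) and runs a calculus argument on the simplex $\{(t,u):t,u\geq 0,\ t+u\leq 1\}$: it shows the stationarity conditions $f_t'=f_u'=0$ force $\theta=1$ in the interior, then maximizes $f$ separately on the three boundary segments, finding the values $0$, $\tfrac{1-\theta}{1+\theta}$ (at $u=0$) and $\tfrac{1-\theta^2}{1+\theta^2}$ (at $t+u=1$), the last two via asserted ``simple calculations.'' You instead homogenize by the substitution $q_0=x^2t$, $q_1=y^2(1-t-u)$, $q_2=u$ (which is legitimate: since $x,y>0$ the $q_i$ are nonnegative and not all zero, so $A,B>0$), factor out $B-\theta^2A=(1-\theta^2)\bigl[\theta q_1+(1+\theta^2)q_2\bigr]$, and reduce the claim to the polynomial inequality $(1+\theta^2)q_0\bigl[\theta q_1+(1+\theta^2)q_2\bigr]\leq AB$, which your expansion correctly verifies: the $q_0q_1$ terms cancel, the $q_0q_2$ coefficient is $1+\theta^4-(1+\theta^2)^2=-2\theta^2$, and the remainder is the sum of squares $\theta^2(q_0-q_2)^2+\theta^2q_1^2+\theta(1+\theta^2)q_1q_2\geq 0$. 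Your approach buys several things the paper's does not: it handles $\theta<1$ and $\theta>1$ uniformly with no case split; it gets the bound for $g$ for free from the exact $q_0\leftrightarrow q_2$ symmetry (the paper just says the proof for $g$ is ``very similar''); it replaces calculus and unverified boundary computations with a checkable algebraic identity; it identifies the equality locus $q_1=0$, $q_0=q_2$ (i.e., $t+u=1$, $x^2t=u$, matching the paper's boundary maximizer); and it proves the inequality for arbitrary nonnegative $(q_0,q_1,q_2)$, which makes the independence of the bound from $(x,y)$ — the content of Proposition \ref{peg} — manifest rather than an after-the-fact observation. The paper's method, for its part, tells you where on the simplex the extrema sit and what the intermediate value $\tfrac{1-\theta}{1+\theta}$ on the $u=0$ edge is, information your global bound does not by itself provide.
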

 \begin{proof} We shall consider the case $\theta<1$, because the case $\theta>1$ is similar, moreover the upper bound which we want to prove is invariant under replacement of  $\theta$ by $1/\theta$.  To find the maximal value of the function we have to solve the following system 
 \begin{equation}\label{hu}
 f_u'(t,u,\theta)={\theta x^2t(y^2-\theta)\over ((x^2-\theta y^2)t+\theta(\theta-y^2)u+\theta y^2)^2}+{\theta^2 x^2t(1-\theta y^2)\over (\theta(\theta x^2-y^2)t+(1-\theta y^2)u+\theta y^2)^2}=0
 \end{equation}
  \begin{equation}\label{ht}
  f_t'(t,u,\theta)={\theta x^2(y^2+(\theta-y^2)u)\over ((x^2-\theta y^2)t+\theta(\theta-y^2)u+\theta y^2)^2}-{\theta^2 x^2((1-\theta y^2)u+\theta y^2)\over (\theta(\theta x^2-y^2)t+(1-\theta y^2)u+\theta y^2)^2}=0.
  \end{equation}
  From (\ref{hu}) one has either $t=0$,  or if $t\ne 0$ 
  we note that if $y^2=1/\theta$ then $y^2=\theta$, i.e. $\theta=1$. 
  So we can assume $y^2\ne 1/\theta$.  Then from (\ref{hu}) we get (for $t\ne 0$) that
 $$\left({(x^2-\theta y^2)t+\theta(\theta-y^2)u+\theta y^2\over \theta(\theta x^2-y^2)t+(1-\theta y^2)u+\theta y^2}\right)^2=
 {\theta-y^2\over \theta(1-\theta y^2)}
 $$ 
 and from (\ref{ht}) we get 
 $$\left({(x^2-\theta y^2)t+\theta(\theta-y^2)u+\theta y^2\over \theta(\theta x^2-y^2)t+(1-\theta y^2)u+\theta y^2}\right)^2=
  {(\theta-y^2)u+y^2\over \theta((1-\theta y^2)u+\theta y^2)}.
  $$ 
  Thus we should have   
 $${\theta-y^2\over 1-\theta y^2}= {(\theta-y^2)u+y^2\over (1-\theta y^2)u+\theta y^2},$$
 which is possible only iff $\theta=1$. So it remains only the case $t=0$ which gives a minimum ($=0$) of the function $f$. Hence the maximal value of $f$  is reached on the boundary of the set 
 $\{(t,u)\in [0,1]^2: t+u\leq 1\}$.  We discuss the three line segments 
 of the boundary separately: 
 
 {\it Case: $t=0$.} In this case it was already 
 mentioned above that the function has a minimum which is equal to zero.
 
{\it Case: $u=0$.} In this case simple calculations show that 
$$\max  f(t,0,\theta) =f\left({y^2\over x^2+y^2},0,\theta\right)={1-\theta\over 1+\theta}.$$

{\it Case: $t+u=1$.} In this case simple calculations show that 
$$\max  f(t,1-t,\theta) =f\left({1\over 1+x^2},{x^2\over 1+x^2},\theta\right)={1-\theta^2\over 1+\theta^2}.$$

Note that $ {1-\theta \over 1+\theta}\leq  {1-\theta^2\over 1+\theta^2}$ this completes the proof for $f$.  For $g$ the proof is very similar.
 \end{proof}
 
The following proposition gives a bound for $\gamma$.
 
 \begin{pro} \label{peg} Independently on the possible 
 values of $(x,y)$ (i.e. the solutions to the system (\ref{rs3.2a}),(\ref{rs3.2b})) we have  
 \begin{equation}
 \label{gam} 
 \gamma(x,y)\leq  {|1-\theta^2|\over 1+\theta^2}.
 \end{equation}
 \end{pro}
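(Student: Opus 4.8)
The plan is to obtain Proposition \ref{peg} as the final gluing step, assembling it directly from Proposition \ref{pd} and Lemma \ref{lK}; essentially all of the analytic work has already been carried out, so the argument should be short. First I would unwind the definition of $\gamma(x,y)$: by construction it is the supremum, over all admissible subsets $A$, all boundary conditions $\eta$, all boundary sites $y\in\partial A$, all neighbours $x\in A$ of $y$, and all spin pairs $(s_1,s_2)$, of the projected variation distance $\|\mu_A^{\eta^{y,s_1}}-\mu_A^{\eta^{y,s_2}}\|_x$. The goal then reduces to bounding this single variation distance uniformly over all the data it depends on.

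Second, I would invoke Proposition \ref{pd} to rewrite each such distance. That proposition gives the \emph{exact} identity $\|\mu_A^{\eta^{y,s_1}}-\mu_A^{\eta^{y,s_2}}\|_x=\max\{|p^0(0)-p^2(0)|,\,|p^2(2)-p^0(2)|\}$, and by the definitions preceding Lemma \ref{lK} the two entries of this maximum are precisely $|f(t,u,\theta)|$ and $|g(t,u,\theta)|$, where $t=p_0$ and $u=p_2$ are the marginals of the free measure $\mu_A^{\eta^{y,\,free}}$ at $x$. The key structural observation is that the combinatorial data $(A,\eta,y,x,s_1,s_2)$ enter the expression \emph{only} through the pair $(t,u)$; the matrix entries, which carry all dependence on $(x,y,\theta)$, stay fixed throughout.

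Third, I would note that since $(p_0,p_1,p_2)$ is a genuine probability distribution on $\{0,1,2\}$, the pair $(t,u)=(p_0,p_2)$ automatically lies in the simplex $\{(t,u):t,u\geq 0,\ t+u\leq 1\}$, which is exactly the domain on which Lemma \ref{lK} furnishes the bounds $|f(t,u,\theta)|\leq |1-\theta^2|/(1+\theta^2)$ and $|g(t,u,\theta)|\leq |1-\theta^2|/(1+\theta^2)$ (the lemma covers both $\theta<1$ and $\theta>1$, which accounts for the absolute value in the numerator). Hence for every admissible choice of data the variation distance is at most $|1-\theta^2|/(1+\theta^2)$, and taking the supremum over all of them yields $\gamma(x,y)\leq |1-\theta^2|/(1+\theta^2)$.

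The proof is therefore essentially bookkeeping, and the only point requiring genuine care is the uniformity claim that makes the statement hold \emph{independently of the values of} $(x,y)$. This is precisely what Lemma \ref{lK} delivers: its maximization of $f$ and $g$ over the full simplex uses only $x,y>0$ and $\theta>0$, never the fact that $(x,y)$ solves the fixed-point system (\ref{rs3.2a}),(\ref{rs3.2b}). I would emphasize that the supremum defining $\gamma$ collapses to a supremum over $(t,u)$ in the simplex, because the free marginal is the sole source of variation, so the per-configuration estimate of Lemma \ref{lK} transfers verbatim into a bound on $\gamma$; no further maximization is needed, and what little difficulty there is has already been absorbed into the preceding lemmas.
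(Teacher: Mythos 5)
Your proposal is correct and takes essentially the same route as the paper, whose proof of Proposition \ref{peg} is just the one-line remark that it is a corollary of the preceding lemmas; you make that assembly explicit by reducing the supremum defining $\gamma$ via Proposition \ref{pd} to $\max\{|f(t,u,\theta)|,\,|g(t,u,\theta)|\}$ with $(t,u)=(p_0,p_2)$ ranging over the simplex, and then applying Lemma \ref{lK}. Your added observation that Lemma \ref{lK} uses only $x,y>0$ and never the fixed-point property of $(x,y)$, which is exactly what justifies the uniformity ``independently of the values of $(x,y)$'' asserted in the proposition, is a faithful and slightly more careful rendering of the paper's terse argument.
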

 \begin{proof}
 This is a corollary of above-mentioned lemmas.
  \end{proof}
 
\subsubsection{Computation of $\kappa$.} Now we shall compute the constant $\kappa$.

  Using (\ref{ka}) and (\ref{m1}) we get
  $$
 \kappa(x,y)=
 {1\over 2}\max_{i,j}\sum_{l=0}^2|P_{il}-P_{jl}|={1\over 2Z}\max\left\{{x^2|y-\theta x|+(y^2+\theta)|x-\theta y|\over xy}\right.,$$
  \begin{equation}\label{kab}
  \left.
  {x^2|1-\theta^2 x|+\theta y^2|1-x|+|\theta^2-x|\over x},\ \
  {(\theta x^2+y^2)|1-\theta y|+|\theta- y|\over y}
 \right\}, \end{equation}
 where $Z=\theta^2x^2+\theta y^2+1$.
 
 We shall compute $\kappa(x,y)$ for $(x,y)\in \{(1,y_1),(x_4,y_4),(x_5,y_5),(x_6,y_6),(x_7,y_7)\}$. 
 
 For the solution $(1,y_1)$ we have 
 $$y_1=y_1(\theta)=\left\{\begin{array}{lll}
 \in (1,{1\over \theta}), \ \ \mbox{if} \ \ \theta<1;\\[3mm]
 1, \ \ \mbox{if}  \ \ \theta=1;\\[3mm]
 \in({1\over \theta},1), \ \ \mbox{if} \ \ \theta>1.
 \end{array}\right.$$
 Using these relations and the fact that $y_1$ is a solution of (\ref{y3}) from (\ref{kab}) we get
 \begin{equation}\label{k1}
 \kappa(1,y_1)={|1-\theta^2|\over 1+\theta^2+\theta y_1^2}.
  \end{equation}
Now we shall compute $\kappa$ for $(x_i,y_i)$, $i=4,5,6,7$. Recall that all of them exist only when $\theta\leq \theta_c'$. Moreover, $x_i<1$ if $i=4,5$ and $x_i>1$ if $i=6,7$. So for $\theta<1$ from the system (\ref{rs3.2a}),(\ref{rs3.2b}) we get 
the following inequalities
$$y-\theta x={(1-\theta^2)(y^2+\theta)\over Z}>0, \ \ x-\theta y={x^2(1-\theta^2)\over Z}>0, $$ $$1-\theta^2x={(1-\theta^2)(\theta y^2+\theta^2+1)\over Z}>0, \ \ x-\theta^2={(1-\theta^2)((\theta^2+1)x^2+\theta y^2)\over Z}>0,$$
$$1-\theta y={1-\theta^2\over Z}>0, \ \ y-\theta={(1-\theta^2)(\theta x^2+y^2)\over Z}>0.$$
Using these inequalities, we obtain from the equality (\ref{kab}) that 
$$\kappa(x,y)={1-\theta^2\over Z^2xy}\max\{x^2(y^2+\theta), \ \ y(\theta y^2+(\theta^2+1)x^2),\ \ x(\theta x^2+y^2)\}.$$
It is easy to see that 
$$\max\{x_i^2(y_i^2+\theta), \ \ y_i(\theta y_i^2+(\theta^2+1)x_i^2),\ \ x_i(\theta x_i^2+y_i^2)\}=$$
$$\left\{\begin{array}{lll}
y_i(\theta y_i^2+(\theta^2+1)x^2_i), \ \ \mbox{if} \ \ i=4\\[2mm]
x_i(\theta x_i^2+y_i^2),  \ \ \mbox{if} \ \ i=5\\[2mm]
x_i^2(y_i^2+\theta), \ \ \mbox{if} \ \ i=6,7.
\end{array}\right.$$
Hence we get 
 \begin{equation}\label{k2}
 \kappa(x_i,y_i)=\left\{\begin{array}{lll}
  {(1-\theta^2)(\theta y_4^2+(\theta^2+1)x^2_4)\over x_4(1+\theta^2x_4^2+\theta y_4^2)^2}, \ \ \mbox{if} \ \ i=4;\\[3mm]
   {(1-\theta^2)(\theta x_5^2+y^2_5)\over y_5(1+\theta^2x_5^2+\theta y_5^2)^2}, \ \ \mbox{if} \ \ i=5;\\[3mm]
 {x_i(1-\theta^2)(y_i^2+\theta)\over y_i(1+\theta^2x_i^2+\theta y_i^2)^2}, \ \ i=6,7.
 \end{array}\right.
  \end{equation}
\subsection{Proof of Proposition \ref{te}}

{\it Proof of} (a): To check extremality of TISGM $\mu_1$ we should check 
$2\kappa\gamma<1$. Using the above mentioned bound of $\gamma$ and formula (\ref{k1}) we will check   
$$2\kappa(1,y_1(\theta))\gamma(1,y_1(\theta))\leq  {2(1-\theta)^2\over (1+\theta^2)(1+\theta^2+\theta y_1^2(\theta))}<1.$$
Denote
$$U_1(\theta)={2(1-\theta)^2\over (1+\theta^2)(1+\theta^2+\theta y_1^2(\theta))}-1.$$ 

The function $U_1(\theta)$ only depends on $\theta$ and has no additional parameters.  
From its graph one can see the region of $\theta$ where the function is negative.
Thus looking  on the graph of $U_1(\theta)$ (see Fig.\ref{ey1}) completes the arguments for 
part a).
 % \begin{figure}
  %\begin{center}
  %\includegraphics[width=8cm]{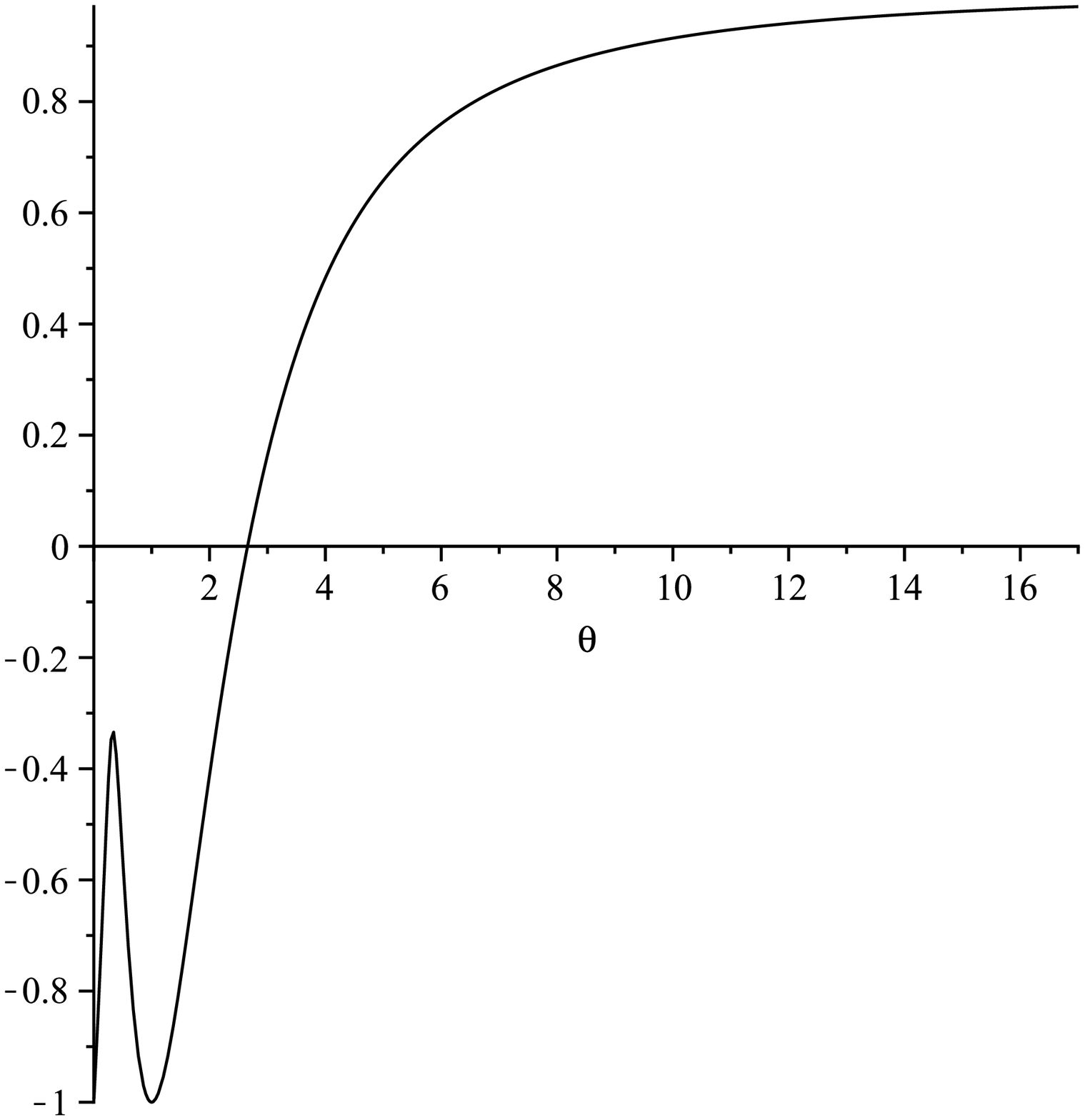}\ \ \ \
    %\includegraphics[width=8cm]{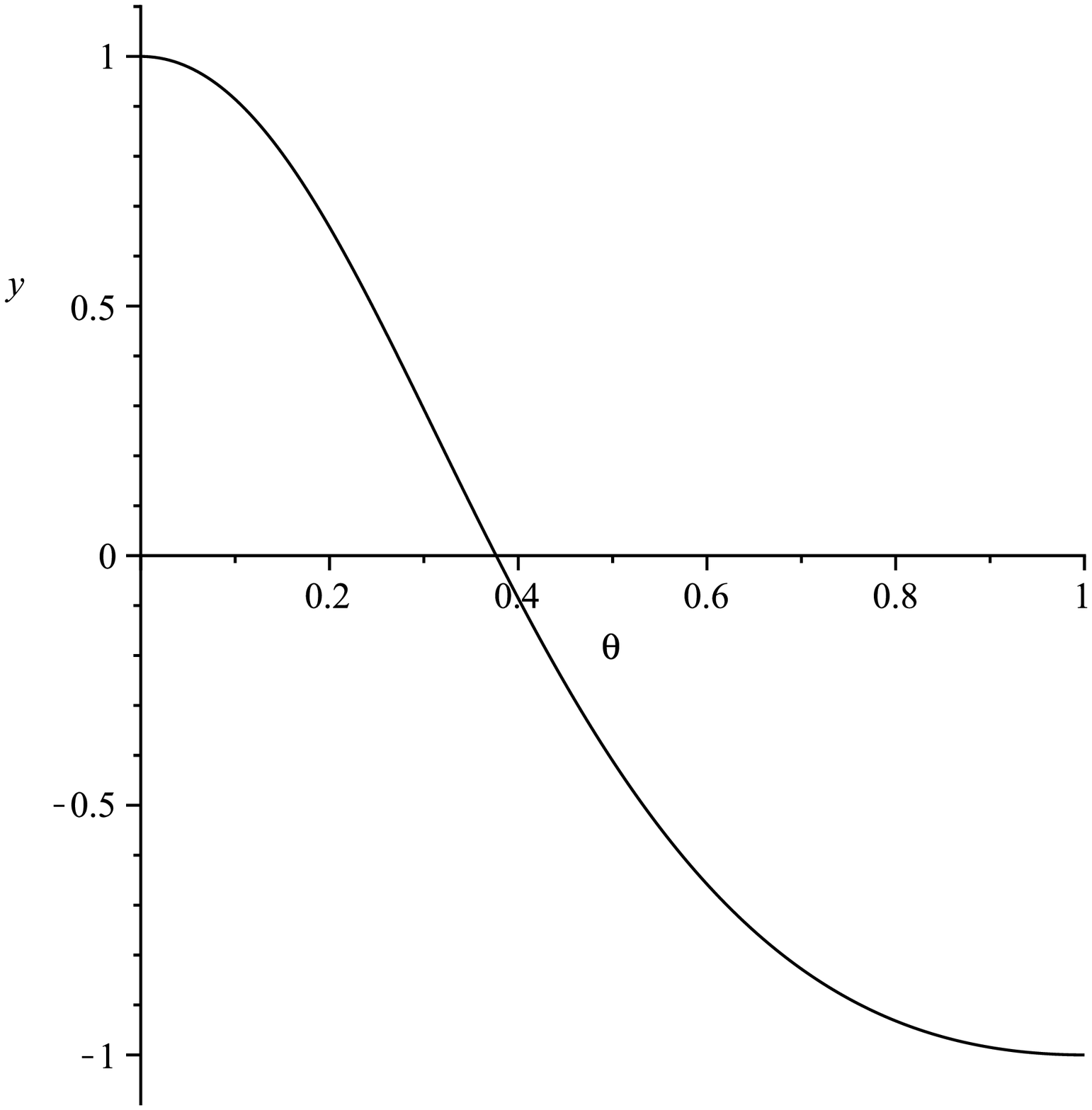}
 %\end{center}
 %\caption{The graph of the function $U_1(\theta)$ (left). The graph of function $U_1(1/\theta)$, for $theta\in (0,1)$ to see the behaviour of $U_1(\theta)$ on the infinite set $(1,+\infty)$. }\label{ey1}
 %\end{figure} 
 
{\it Proof of} (b) and (c): Consider the following functions

$$U_4(\theta)= {2(1-\theta^2)^2(\theta y_4^2+(\theta^2+1)x^2_4)\over x_4(1+\theta^2)(1+\theta^2x_4^2+\theta y_4^2)^2}-1,$$
$$U_5(\theta)= {2(1-\theta^2)^2(\theta x_5^2+y^2_5)\over y_5(1+\theta^2)(1+\theta^2x_5^2+\theta y_5^2)^2}-1,$$

 $$U_i(\theta)={2x_i(1-\theta^2)^2(y_i^2+\theta)\over y_i(1+\theta^2)(1+\theta^2x_i^2+\theta y_i^2)^2}-1, \ i=6,7.$$
 By above mentioned formula (\ref{k2}) and the bound of $\gamma$ we get  
 $$2\kappa(x_i,y_i)\gamma(x_i,y_i)-1\leq U_i(\theta).$$
 Now the proof of part (b) follows from the behavior of the 
 graph of $U_i(\theta)$, $i=5,6$. The proof of  
 part (c) follows from the behavior of the 
 graph of $U_i(\theta)$ for $i=4,7$ (see Fig.\ref{ey4} and Fig. \ref{ey5}).
 % \begin{figure}
  % \begin{center}
   %\includegraphics[width=6.5cm]{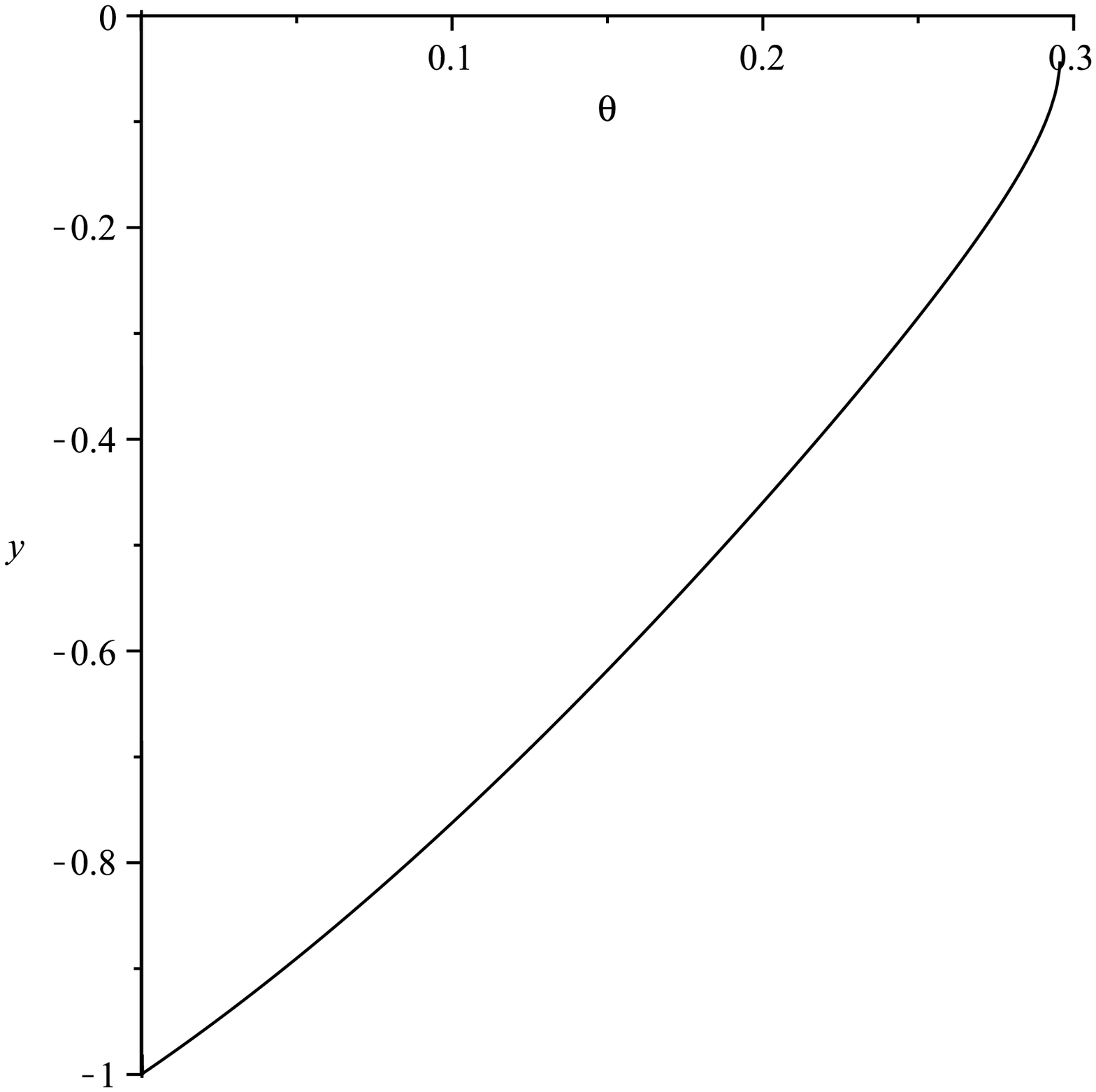} \ \
    %\includegraphics[width=6.5cm]{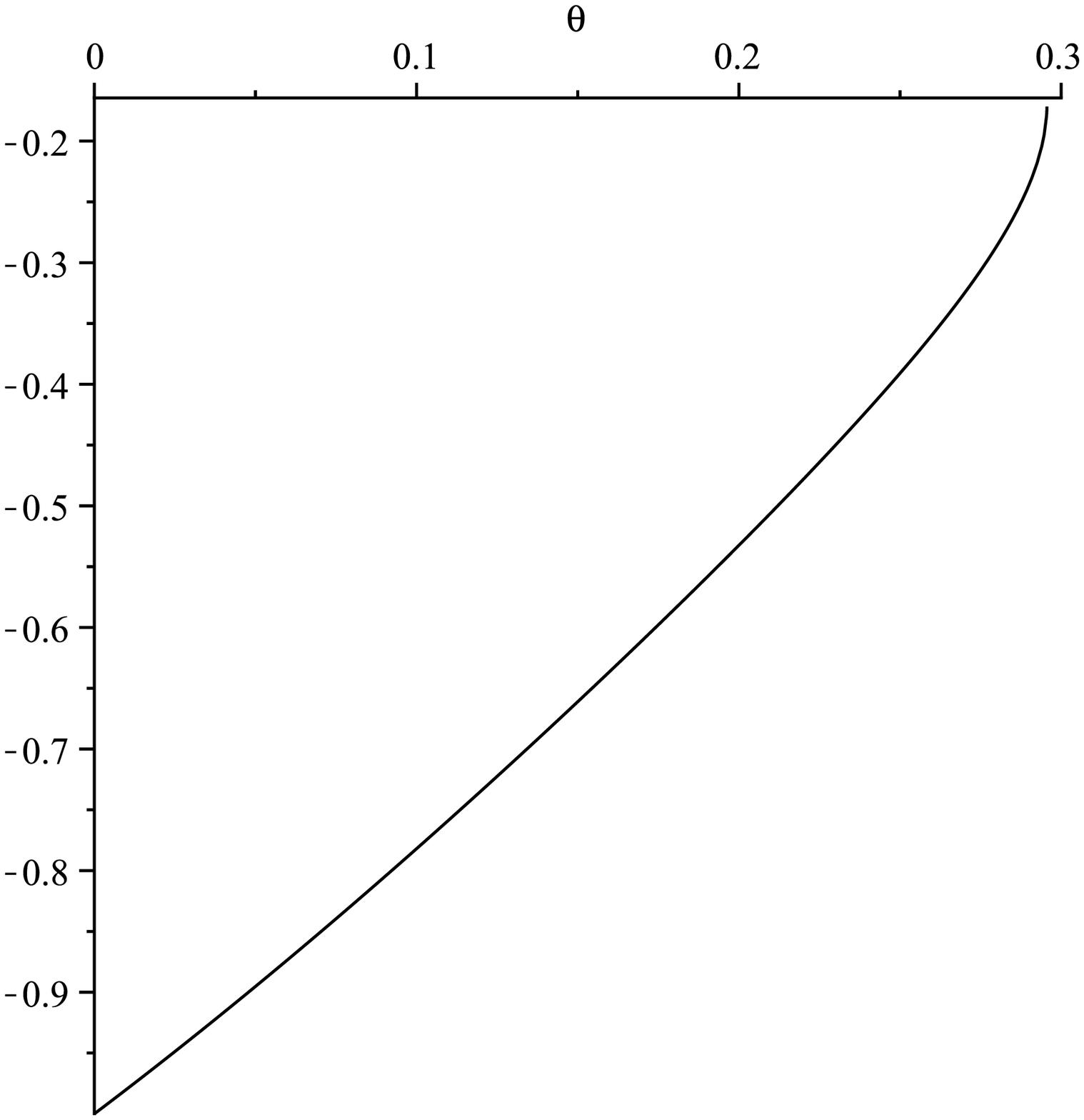} 
  %\end{center}
  %\caption{The graph of the function $U_4(\theta)$ (left) and $U_7$ (right).} \label{ey4}
  %\end{figure} 
   %\begin{figure}
    % \begin{center}
     %\includegraphics[width=6.5cm]{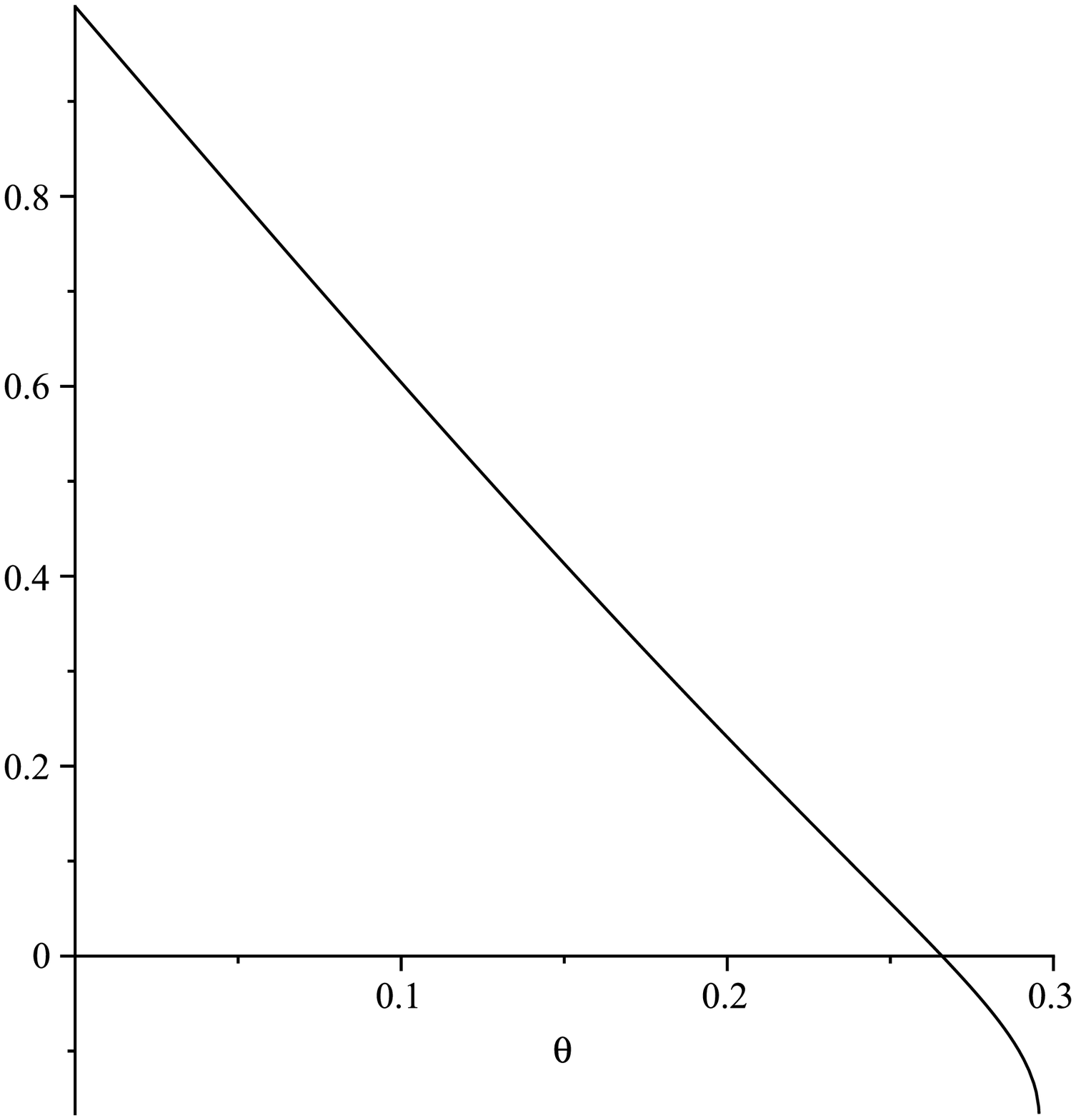} \ \
      %\includegraphics[width=6.5cm]{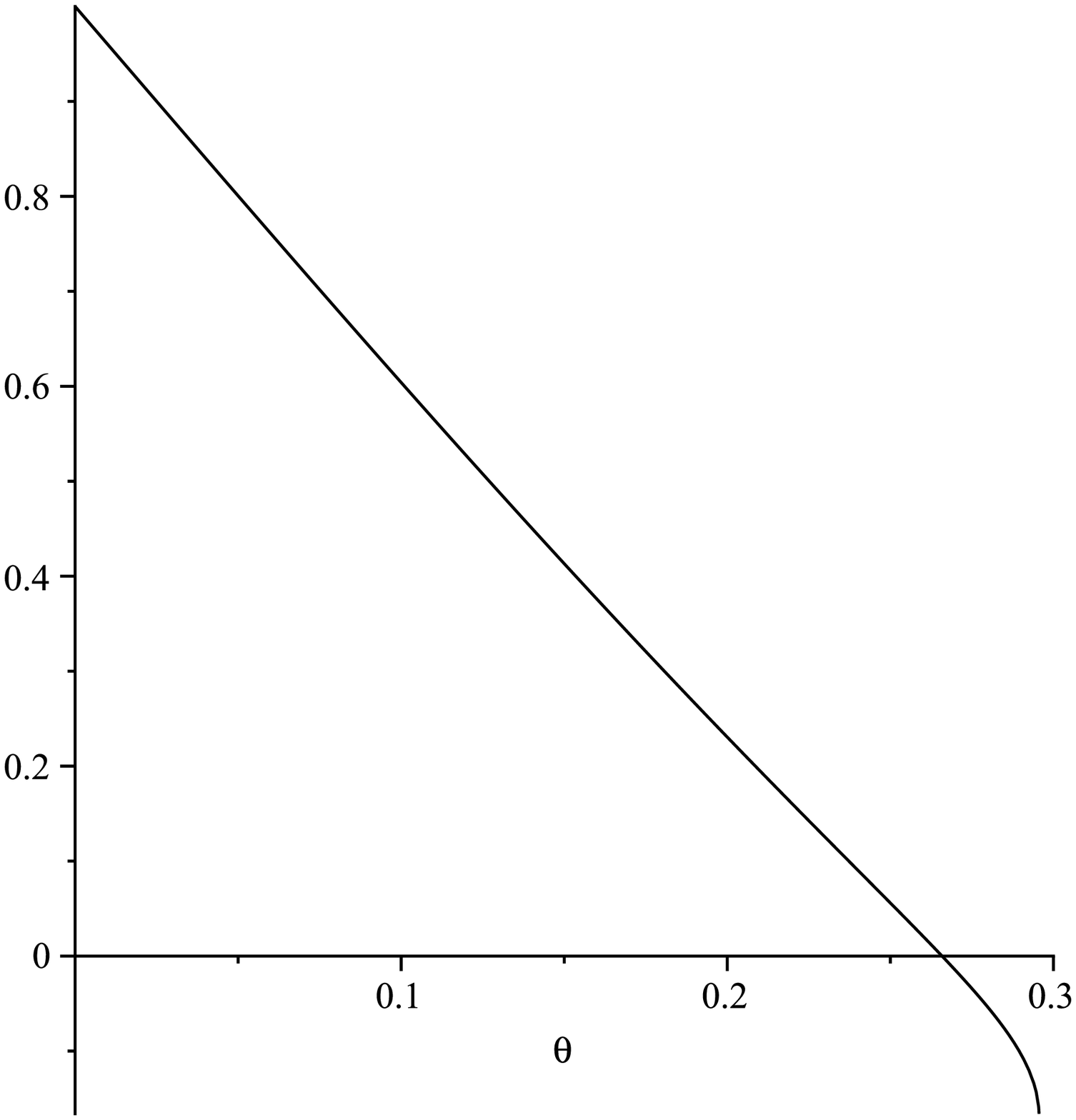} 
    %\end{center}
    %\caption{The graph of the function $U_5(\theta)$ (left) and $U_6$ (right).} \label{ey5}
    %\end{figure} 
 
This finishes our proof of  Theorem \ref{tm} which 
summarized the results of Propositions \ref{p2}, \ref{tne}, and \ref{te}.  
 \section*{ Acknowledgements}
 
 U.A. Rozikov thanks the  DFG
 Sonderforschungsbereich SFB $|$ TR12-Symmetries and Universality in Mesoscopic Systems
 and the Ruhr-University Bochum (Germany) for financial support and hospitality.

\section{Figures}      
  \begin{figure}
    \begin{center}
   \includegraphics[width=9cm]{}
   \end{center}
   \caption{ The graphs of functions $x_i=x_i(\theta)$, $i=4,5,6,7$. Upper thin curve is $x_6$ and lower thin curve is $x_4$. Upper bold curve is $x_7$ and lower bold curve is $x_5$.}\label{four}
   \end{figure}
    \begin{figure} 
    \begin{center}
    \includegraphics[width=12cm]{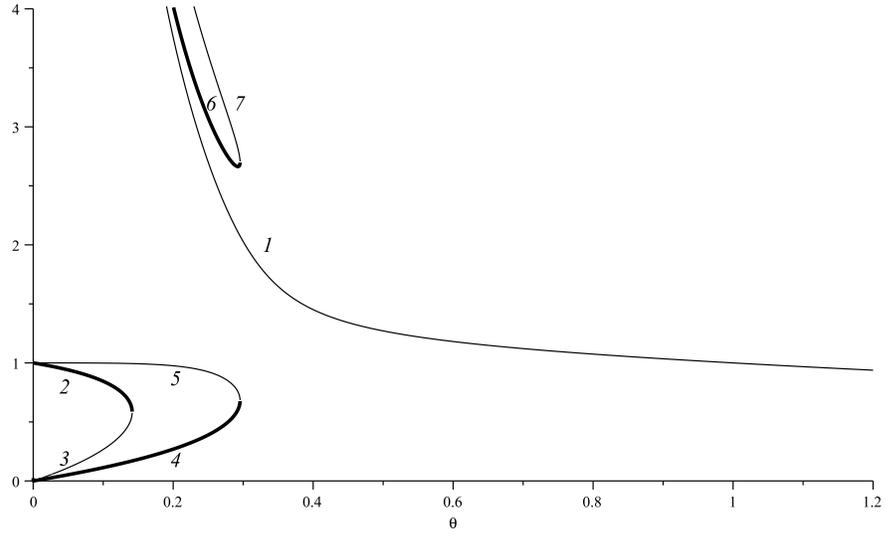}
    \end{center}
    \caption{
     The graphs of functions $y_i=y_i(\theta)$, $i=1,2,...,7$. }\label{y1-7}
    \end{figure}
 % \end{itemize}
  %\end{pro}
  %\twocolumn[
  \begin{figure}


  \begin{center}
  \includegraphics[width=5.6cm]{ny1.eps} \ \ \ \ \includegraphics[width=5.6cm]{ny1a.eps}\ \ \ \ 
   \end{center}
  \caption{The graphs of functions $\eta_1(\theta)$ (left). The graph of  $\eta_1(1/\theta)$ (right) for $\theta\in (0,1)$ which shows that $\eta_1(\theta)$ is an increasing function in $(1,+\infty)$   .}\label{ny1ny2}
   \end{figure}
   \begin{figure}
   \begin{center}
   \includegraphics[width=6.5cm]{ny2.eps}\ \ \ \
  \includegraphics[width=6.5cm]{ny3.eps} 
  \end{center}
  \caption{The graphs of functions $\eta_2(\theta)$ (left) and $\eta_3(\theta) $ (right).}\label{ny3ny4}
    \end{figure}
   \begin{figure}
   \begin{center}
  \includegraphics[width=6.5cm]{ny4.eps}\ \ \ \
  \includegraphics[width=6.5cm]{ny5.eps}
  \end{center}
  \caption{The graphs of functions $\eta_4(\theta)$ (left) and $\eta_5(\theta)$ (right).}\label{ny5ny6}
   \end{figure}
   \begin{figure}
   \begin{center}
    \includegraphics[width=6cm]{ny6.eps}\ \ \ \
  \includegraphics[width=6cm]{ny7.eps}
  \end{center}
  \caption{The graph of functions $\eta_6(\theta)$ (left) and $\eta_7(\theta)$ (right).}\label{ny7}
   \end{figure}  
     \begin{figure}
      \begin{center}
     \includegraphics[width=9.3cm]{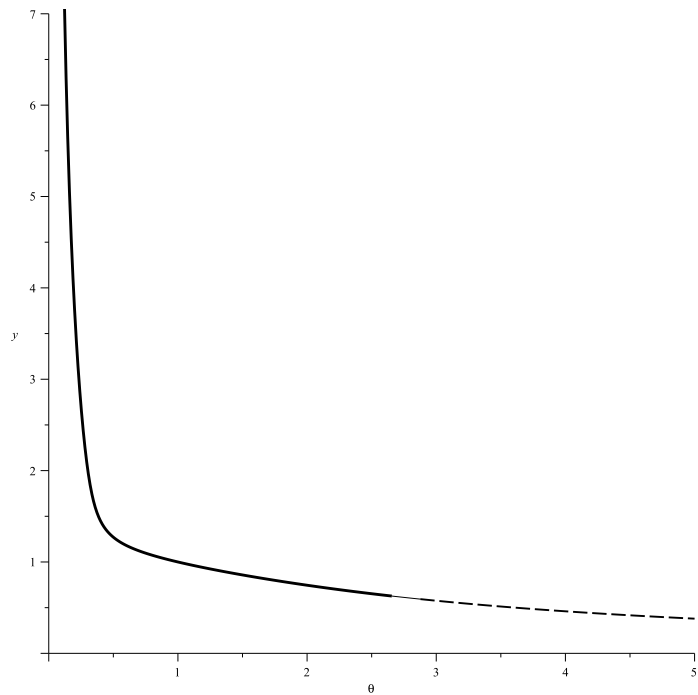}
     \end{center}
      \caption{The graph of $y_1(\theta)$. The bold curve corresponds to region of the function where corresponding TISGM is extreme. The dashed bold curve corresponds to region of the function where corresponding TISGM is non-extreme. The gap between the two types of curves are given by the thin curve. The length of gap is $\approx 0.22$. }\label{ney1}
    \end{figure}
      \begin{figure}
      \begin{center}
      \includegraphics[width=9.1cm]{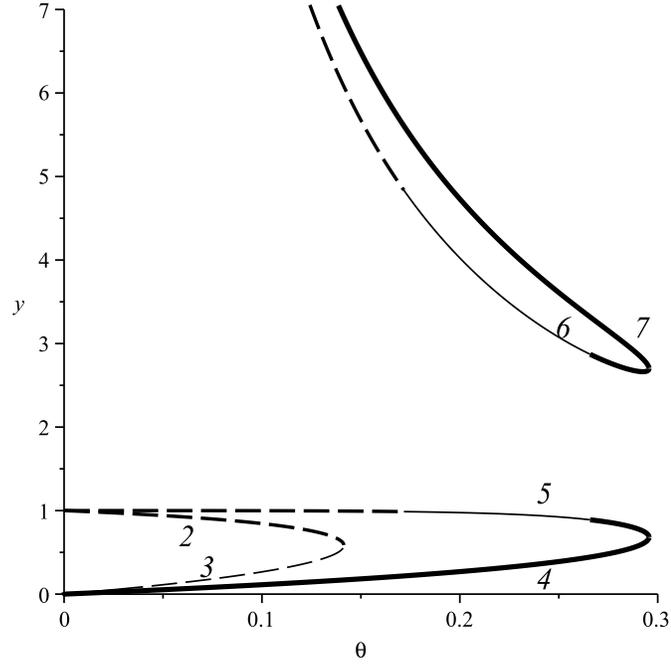}
     \end{center}
     \caption{The graphs of the functions $y_i(\theta)$, $i=2,3,4,5,6,7$.  The bold curves correspond to regions of the functions where the corresponding TISGM is extreme. The dashed  curves correspond to regions of the functions where corresponding TISGMs are non-extreme. The gap between the two types of curves are given by the thin curves in the graphs of $y_5$ and $y_6$. The length of each gap is $\approx 0.09.$ }\label{ney6y7}
    \end{figure} 
      \begin{figure}
      \begin{center}
      \includegraphics[width=5.7cm]{ey1.eps} \ \ \ \
       \includegraphics[width=5.7cm]{ey1a.eps}
     \end{center}
     \caption{The graph of the function $U_1(\theta)$ (left) and $U_1(1/\theta)$ (right). }\label{ey1}
     \end{figure} 
     \begin{figure}
       \begin{center}
       \includegraphics[width=6.5cm]{ey4.eps} \ \
        \includegraphics[width=6.5cm]{ey7.eps} 
      \end{center}
      \caption{The graph of the function $U_4(\theta)$ (left) and $U_7(\theta)$ (right).} \label{ey4}
      \end{figure} 
       \begin{figure}
         \begin{center}
         \includegraphics[width=6.5cm]{ey5.eps} \ \
          \includegraphics[width=6.5cm]{ey6.eps} 
        \end{center}
        \caption{The graph of the function $U_5(\theta)$ (left) and $U_6(\theta)$ (right).} \label{ey5}
        \end{figure}

\end{document}